\newtheoremstyle{theorem}{1em}{1em}{\slshape}{0pt}{\bfseries}{.}{ }{}
\theoremstyle{theorem}
\newtheorem{theorem}{Theorem}
\newtheorem*{theorem*}{Theorem}
\newtheorem{lemma}[theorem]{Lemma}
\theoremstyle{remark}
\newtheorem*{remark*}{Remark}
\providecommand{\setN}{\mathbb{N}}
\providecommand{\setZ}{\mathbb{Z}}
\providecommand{\setR}{\mathbb{R}}
\theoremstyle{theorem}
\theoremstyle{definition}
\newtheorem{definition}{Definition}
\newtheorem*{claim*}{Claim}
\DeclareMathAlphabet{\pazocal}{OMS}{zplm}{m}{n}
\begin{document}

\title{Number Balancing is as hard as Minkowski's Theorem and Shortest Vector}
\date{University of Washington, Seattle, WA 98195} 
\author{Rebecca Hoberg\thanks{Email: {\tt rahoberg@uw.edu}} \and Harishchandra Ramadas\thanks{Email: {\tt ramadas@uw.edu}} \and Thomas Rothvoss\thanks{Email: {\tt rothvoss@uw.edu}. Supported by NSF grant 1420180 with title ``\emph{Limitations of convex relaxations in combinatorial optimization}'',  an Alfred P. Sloan Research Fellowship and a David \& Lucile Packard Foundation Fellowship.} \and Xin Yang\thanks{Email: {\tt yx1992@uw.edu}}} 

\maketitle

\begin{abstract}
The number balancing (NBP) problem is the following: given real numbers
$a_1,\ldots,a_n \in [0,1]$, find two disjoint subsets $I_1,I_2 \subseteq [n]$ so 
that the difference $|\sum_{i \in I_1}a_i - \sum_{i \in I_2}a_i|$ of their sums is minimized. An application of 
the pigeonhole principle shows that there is always a solution where
the difference is at most $O(\frac{\sqrt{n}}{2^n})$. 
Finding the minimum, however, is NP-hard. 
In polynomial time,
the \emph{differencing algorithm} by Karmarkar and Karp from 1982 can produce a solution
with difference at most $n^{-\Theta(\log n)}$, but no further
improvement has been made since then. 

In this paper, we show a relationship between NBP and Minkowski's Theorem. First we show that an approximate oracle for Minkowski's Theorem gives an approximate NBP oracle. Perhaps more surprisingly, we show that an approximate NBP oracle gives an approximate Minkowski oracle. In particular, we prove that any polynomial time algorithm that 
guarantees a solution of difference at most $2^{\sqrt{n}} / 2^{n}$ would give a polynomial approximation for Minkowski as well as a polynomial factor approximation algorithm for the Shortest Vector Problem.
\end{abstract}

\section{Introduction}

One of \emph{six basic NP-complete problems} of Garey and Johnson~\cite{GJNPCompleteness} is the 
\emph{partition problem} that for a list of numbers $a_1,\ldots,a_n$ asks whether there is a partition of the indices 
so that the sums of the numbers in both partitions coincide. 
 Partition and related problems like knapsack, subset sum and 
bin packing are some of the fundamental classical problems in theoretical computer science with numerous practical applications; see for example the textbooks \cite{KnapsackProblemsMartelloTothBook1990,KnapsackProblemsKellererEtAlBook2004}
and the article of Mertens~\cite{mertens2006easiest}. 
In this paper, we study a variant called the \emph{number balancing problem} (NBP), where the goal is to find two disjoint subsets $I_1,I_2 \subseteq \{ 1,\ldots,n\}$ so that the difference 
$|\sum_{i \in I_1} a_i - \sum_{i \in I_2} a_i|$ is minimized. 
Equivalently, given a vector of numbers $\bm a = (a_1,\ldots,a_n) \in [0,1]^n$, we want to find a vector of \emph{signs} $\bm x \in \lbrace{-1,0,1\rbrace}^n\setminus\{\bm{0}\}$ so that $\left|\left<\bm a,\bm x\right>\right| = \left|\sum_{i=1}^n x_i a_i\right|$ is minimized.
 Woeginger and Yu~\cite{woeginger1992equal} 
studied this problem under the name ``equal-subset-sum'' 
and showed that it is NP-hard to decide whether there are two disjoint subsets that sum up to the 
exact same value. This version has also been extensively studied in combinatorics~\cite{lunnon1988integer,bohman1996sum,lev2011size}.

On the positive side, it is not hard to prove that there is always a solution with exponentially small error.
Suppose that $a_1,\ldots,a_n \in [0,1]$.
Consider the list of $2^n$ many numbers $\sum_{i=1}^n a_ix_i$ for all $\bm{x} \in \{ 0,1\}^n$. All these numbers fall into the interval $[0,n]$, hence by the \emph{pigeonhole principle}, we can find two distinct vectors $\bm{x},\bm{x}' \in \{ 0,1\}^n$ with $|\sum_{i=1}^n a_ix_i - \sum_{i=1}^n a_ix_i'| \leq \frac{n}{2^n-1}$. Then $\bm{x}-\bm{x}'$ gives the desired solution. Note that the bound can be slightly improved to $O(\frac{\sqrt{n}}{2^n})$ by using the fact that due 
to \emph{concentration of measure} effects, for a constant
fraction of vectors $\bm{x} \in \{ 0,1\}^n$, the sums $\sum_{i=1}^n a_ix_i$ fall into an interval of length $\sqrt{n}$ (instead of $n$). 

However, since these arguments rely on the pigeonhole principle, they are non-constructive. Restricting the non-constructive argument to polynomially many ``pigeons'' provides a 
simple polynomial time algorithm to find at least an $\bm{x} \in \{ -1,0,1\}^n \setminus \{\bm{0}\}$ with $|\left<\bm{a},\bm{x}\right>| \leq \frac{1}{\textrm{poly}(n)}$ for an arbitrarily small polynomial. Interestingly, the only known
polynomial time algorithm that gives a better guarantee is Karmarkar and Karp's \emph{differencing algorithm}~\cite{KKDifferencing}
which provides the bound $|\left<\bm{a},\bm{x}\right>| \leq n^{-c\log(n)}$ for some constant $c>0$. 
Their algorithm uses a recursive scheme; find $\Theta(n)$ pairs of numbers $a_i$ of distance at most $\Theta(\frac{1}{n})$ and create an instance consisting of their differences, then recurse.

This leads to the natural question: \emph{Given $a_1,\ldots,a_n \in [0,1]$, what upper bound on $|\sum_{i=1}^n a_ix_i|$ can be guaranteed if $\bm{x} \in \{ -1,0,1\}^n \setminus \{\bm{0}\}$ is to be chosen in polynomial time?}

While answering this question directly seems out of reach, it appears that NBP falls into a class of problems where good solutions exist due to the pigeonhole principle, 
such as the \emph{Shortest Vector Problem} or \emph{Minkowski's Theorem}. Recall that given linearly independent 
vectors $\bm{b}_1,\ldots,\bm{b}_n \in \setR^n$, a \emph{(full rank) lattice} is the set $\Lambda := \{ \sum_{i=1}^n \lambda_i\bm{b}_i : \lambda_i \in \setZ \; \forall i=1,\ldots,n\}$. 
The set $\{\bm{b}_1,\ldots,\bm{b}_n\}$ is called a $\emph{basis}$ for $\Lambda$ and we define $\det(\Lambda):=|\det(B)|$.
The Shortest vector problem then consists of finding a non-zero vector in $\Lambda$ that minimizes the Euclidean norm. 
The famous LLL-algorithm~\cite{lenstra1982factoring} can find a $2^{n/2}$-approximation in polynomial time (the generalized \emph{block reduction method} of
Schnorr~\cite{BlockReductionSchnorr87} brings the factor down to $2^{n \log \log(n) / \log(n)}$). As a rarity in theoretical computer science, the shortest 
vector problem admits $(\textrm{NP} \cap \textrm{coNP})$-certificates for a value that is at most a factor $O(\sqrt{n})$
away from the optimum~\cite{LatticeProblemsInNPintersec-coNP-AharonovRegevJACM05}, while the best known hardness lies at a subpolynomial bound of $n^{\Theta(1/\log \log n)}$~\cite{SVPhardness-RegevHavivSTOC07}.
Using the pigeonhole principle one can show that a lattice $\Lambda$ contains a vector of length at most 
$O(\sqrt{n}) \cdot \det(\Lambda)^{1/n}$. Interestingly, a polynomial time algorithm that achieves this bound would imply an
$O(n)$-approximation algorithm even for worst-case lattices~\cite{ajtai1996generating},  
enough to break lattice-based cryptosystems~\cite{lagarias1990korkin}. 

\emph{Minkowski's Theorem} tells us that any symmetric convex body $K \subseteq \setR^n$ of volume at least $2^n$ must intersect $\setZ^n \setminus \{\bm{0}\}$, see for example~\cite{LecturesOnDiscreteGeometryMatousek2002}. 
This theorem is proven by placing translates of $\frac{1}{2}K$ at any lattice point and then inferring an 
overlap due to the pigeonhole principle. Again, one can consider the algorithmic question: \emph{given a symmetric convex body $K$ with volume at least $2^n$, for what factor $\rho$ can one be guaranteed to find an $\bm{x} \in (\rho K) \cap (\setZ^n \setminus \{ \bm{0}\})$ in polynomial time?} We would like to point out that 
this factor $\rho$ is within a polynomial factor of the shortest vector approximability using the fact that there is a linear transformation sandwiching $K$ between two Euclidean balls whose radius differs by a factor of $\sqrt{n}$~\cite{JohnsTheorem1948}. 

It is not hard to use an \emph{exact} oracle for Minkowski's Theorem to find a good number balancing solution, since the body $K := \{ \bm{x} \in \left(-2,2\right) : |\sum_{i=1}^n x_ia_i| \leq \Theta(\frac{n}{2^n})\}$ has a volume of $2^n$.
However, it is not clear how we could use an \emph{approximate} oracle.
For example, it is known that the LLL-algorithm can be used to find 
a nonzero integer vector $\bm{x} \in \rho K$ for a factor of $\rho = \textrm{poly}(n) \cdot 2^{n/2}$. 
While the error guarantee of $|\sum_{i=1}^n a_ix_i| \leq \textrm{poly}(n) \cdot 2^{-n/2}$ outperforms the 
Karmarkar-Karp algorithm, we
only know that $\|\bm{x}\|_{\infty} < 2\rho$, which means that $\bm{x}$ will not be a valid solution if $\rho>1$\footnote{If $\rho\leq 2-\varepsilon$, then one can still obtain an error of $|\sum_{i=1}^n a_ix_i| \leq 2^{-\Theta(\varepsilon n)}$, but this breaks down if $\rho \geq 2$.}. This leads us to the next question: \emph{what factor $\rho$ is needed for Minkowski's Theorem to improve over Karmarkar-Karp's bound}? Again, we should mention that a $\textrm{poly}(n)$-approximation for Minkowski is not known
to be inconsistent with $\mathrm{NP} \neq \mathrm{P}$.

We have seen that in a certain sense the Shortest Vector Problem and Minkowski's Theorem
are generalizations of number balancing. This brings us to the question about the reverse: \emph{given an oracle that solves NBP within an exponentially small error, can this give a non-trivial oracle for the Shortest Vector Problem or Minkowski's Theorem?}

\subsection{Contribution}

In this work, we provide some answers to the questions raised above, by relating the complexity of the number balancing
problem to Minkowski's Theorem. For $\rho \geq 1$, we define a \emph{$\rho$-approximation for the Minkowski problem} as a
polynomial time algorithm that on input\footnote{We will specify in the text, how the body $K$ is represented.} of a symmetric convex set $K \subseteq \setR^n$
with $\textrm{vol}_n(K) > 2^n$, finds a vector in $(\rho \cdot K) \cap (\setZ^n \setminus \{\bm{0}\})$. Moreover, for $\delta>0$ we define a \emph{$\delta$-approximation}
for the number partitioning problem as a polynomial time algorithm that receives $\bm{a} \in [0,1]^n$ 
as input and produces a vector $\bm{x} \in \{ -1,0,1\}^n \setminus\{\bm{0}\}$ with $|\left<\bm{a},\bm{x}\right>| \leq \delta$. We provide the following reduction:  
\begin{theorem}\label{thm:minktonpp}
Suppose there is a $\rho$-approximation for Minkowski's problem for polytopes $K$ with $O(n)$ facets. 
Then there is a $\delta$-approximation for number balancing where $\delta := 2^{-n^{\Theta(1/\rho)}}$
\end{theorem}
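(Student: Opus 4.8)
The plan is to use the $\rho$-approximate Minkowski oracle inside the same kind of recursive differencing scheme that underlies the Karmarkar--Karp algorithm, but to get a much deeper recursion by making each round shrink the instance more aggressively. Recall that the exact Minkowski approach uses the body $K=\{\bm x\in(-2,2)^n: |\langle\bm a,\bm x\rangle|\le \Theta(n/2^n)\}$, which has volume $2^n$. With only a $\rho$-approximate oracle we instead design a box-shaped body of the form $K=\{\bm x: \|\bm x\|_\infty < t,\ |\langle\bm a,\bm x\rangle|\le \varepsilon\}$ where $t\ge 1$ is chosen large enough that, even after the oracle inflates $K$ by $\rho$, the resulting vector $\bm x\in\rho K$ still satisfies $\|\bm x\|_\infty < 2t\rho$, i.e. its entries are bounded integers — and crucially $\langle\bm a,\bm x\rangle$ is still small, namely at most $\rho\varepsilon$. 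The volume constraint $\mathrm{vol}_n(K)>2^n$ forces roughly $\varepsilon \approx n/(2t)^n$, so a larger box $t$ buys us a correspondingly smaller error per round, at the cost of producing a vector with larger coordinates that is no longer a valid $\{-1,0,1\}$ solution on its own.

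The key idea is therefore the same trick Karmarkar--Karp use to clean up: take the integer vector $\bm x$ returned by the oracle (with $\|\bm x\|_\infty$ bounded by some $M=M(t,\rho)$ and $|\langle\bm a,\bm x\rangle|\le\rho\varepsilon$) and \emph{recurse on the partial sums}. Concretely, one uses $\bm x$ to define a new, shorter list of numbers — for instance by grouping the indices according to the value of $x_i$, or by iterating the construction so that after $k$ rounds one has expressed the original problem in terms of an instance whose entries are themselves $k$-fold differences — and each round multiplies the achievable error by roughly $2^{-\Theta(\log n /\log\rho)}$ (since a box of side $t$ with $t$ polynomial in $n$ gives $\varepsilon\approx 2^{-\Theta(n\log t)}$, and one can afford $t\approx n^{\Theta(1/\rho)}$ before the coordinate blow-up destroys validity after $\Theta(\log n)$-ish rounds). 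Chaining $\Theta(\log n)$ such rounds and tracking how the coordinate bounds $M$ compound geometrically gives a final error of $2^{-n^{\Theta(1/\rho)}}$ while keeping $\|\bm x\|_\infty=1$ in the end; this is exactly the claimed $\delta = 2^{-n^{\Theta(1/\rho)}}$.

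The main obstacle — and the place I expect the real work to be — is controlling the interaction between two competing quantities across the recursion: the error $|\langle\bm a,\bm x\rangle|$ must shrink doubly-exponentially, but the $\ell_\infty$-norm of the accumulated integer combination must stay bounded (ultimately reducible to $\{-1,0,1\}$), and these two demands pull in opposite directions through the parameter $t$. One has to choose the box side $t$, the per-round error target $\varepsilon$, and the recursion depth so that: (i) each $K$ genuinely has volume $>2^n$ (possibly after padding the dimension or using a product construction so the oracle's $O(n)$-facet hypothesis is met); (ii) the $\rho$-inflated output still has coordinates small enough to "fold back" into a valid sub-instance; and (iii) the compounded coordinate bound after all rounds is still $O(1)$. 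Verifying that $t\approx n^{c/\rho}$ is the right sweet spot — large enough that $\log t = \Theta((\log n)/\rho)$ makes the depth-$\Theta(\log n)$ recursion reach error $2^{-n^{\Theta(1/\rho)}}$, yet small enough that the coordinate growth per round, which is polynomial in $t$, stays under control over all rounds — is the delicate calculation at the heart of the argument. A secondary technical point is making sure the convex body handed to the oracle is presented as a polytope with only $O(n)$ facets (the box contributes $2n$ facets and the two halfspaces $|\langle\bm a,\bm x\rangle|\le\varepsilon$ contribute $2$ more, so this should be immediate, but one must also handle the rounding of $\bm a$ to rationals of bounded bit-complexity so the oracle runs in genuinely polynomial time).
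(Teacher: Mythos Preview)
Your proposal correctly identifies the two-stage shape of the argument --- first call the oracle on a box-plus-slab body to get an integer vector with bounded $\ell_\infty$-norm and small inner product, then somehow reduce the coefficient bound back down to $1$ --- but the second stage is where the real content lies, and your description of it is both vague and quantitatively off.

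Concretely: the paper does \emph{not} take $t\approx n^{c/\rho}$ or run $\Theta(\log n)$ rounds. It sets the box parameter to $k=3\rho$ (a constant multiple of $\rho$), which already yields error $\le 2^{-n}$ with coefficients in $\{-k,\dots,k\}$. The hard part is then a self-reduction that \emph{halves} the coefficient bound in each round, run for $\lceil\log k\rceil\approx\log\rho$ rounds. Each halving round splits the $n$ indices into $\sqrt n$ blocks of size $\sqrt n$, calls the (large-coefficient) oracle on each block, and then --- this is the step you are missing --- uses an algebraic identity to re-express every multiple $j\beta$ for $j\in\{-k,\dots,k\}$ of a carefully chosen number $\beta$ as an integer combination with coefficients bounded by $\max\{r-1,k-r\}$, where $r=\lceil k/2\rceil$. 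This identity (the paper's Lemma~\ref{lem:RepresentationWithSmallCoefficients}) is what actually converts a $\{-k,\dots,k\}$-solution into a $\{-\lfloor k/2\rfloor,\dots,\lfloor k/2\rfloor\}$-solution; ``grouping indices by the value of $x_i$'' or ``recursing on partial sums'' does not do this. The price paid per halving is that the error guarantee degrades from $2^{-f(n)}$ to $2^{-f(\sqrt n)/2}$, so after $\log k$ halvings the exponent becomes $n^{1/2^{\log k}}/2^{\log k}\approx n^{\Theta(1/\rho)}$, which is exactly where the final bound comes from.

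Your arithmetic (``each round multiplies the error by $2^{-\Theta(\log n/\log\rho)}$'', ``depth $\Theta(\log n)$'') does not produce $2^{-n^{\Theta(1/\rho)}}$ under any reading, and without a concrete coefficient-halving mechanism the recursion you sketch has no way to terminate at $\|\bm x\|_\infty\le 1$.
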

In fact, it suffices to have such an oracle for the linear transformation of a cube, which is equivalent to an oracle for SVP in the $\|\cdot \|_\infty$ norm. 
\begin{theorem} \label{thm:ReducingNBPToSVP}
Suppose that there is a polynomial time algorithm that given a lattice 
$\Lambda \subseteq \setR^n$ with $\det(\Lambda) \leq 1$ finds a non-zero vector $\bm{x} \in \Lambda$ of length $\|\bm{x}\|_{\infty} \leq \rho$.
Then there is $\delta$-approximation for number balancing where $\delta := 2^{-n^{\Theta(1/\rho)}}$.
\end{theorem}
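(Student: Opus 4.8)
The plan is to encode the number-balancing instance $\bm a$ as an $\ell_\infty$-shortest-vector instance whose short vectors decode to candidate solutions, and then to amplify the oracle's approximation factor by a Karmarkar--Karp-style recursion.

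First I would build the basic gadget. Assume $\|\bm a\|_2\ge 1$ (otherwise the problem is trivial). For a scale $N$ (think $N=2^n$), pick $s>0$ with $s^n\sqrt{1+N^2\|\bm a\|_2^2}\le 1$, so $s=\Theta\big((N\|\bm a\|_2)^{-1/n}\big)$, and let $\Lambda\subseteq\setR^{n+1}$ be the lattice spanned by the columns of $\left(\begin{smallmatrix} s I_n\\ sN\bm a^{\top}\end{smallmatrix}\right)$. This is a linear image of $\setZ^n$ (equivalently an $\ell_\infty$-SVP instance) with $\det(\Lambda)\le 1$, so the hypothesis applies. Every lattice vector is $(s\bm z,\,sN\langle\bm a,\bm z\rangle)$ for some $\bm z\in\setZ^n$, so a vector of $\ell_\infty$-length $\le\rho$ gives $\bm z\neq\bm 0$ with $\|\bm z\|_\infty\le\rho/s=O(\rho)$ and $|\langle\bm a,\bm z\rangle|\le\rho/(sN)=2^{-\Theta(n)}$. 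Running the same construction on any $m$-subset of the coordinates produces, in polynomial time, a nonzero $\bm z$ supported there with $\|\bm z\|_\infty=O(\rho)$ and $|\langle\bm a,\bm z\rangle|\le\rho\cdot 2^{-\Theta(m)}$; by rescaling, the bound becomes $\rho V\cdot 2^{-\Theta(m)}$ when the relevant entries lie in $[0,V]$.

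Next I would iterate. Partition $[n]$ into blocks of size $m$, run the gadget on each to obtain disjointly-supported vectors $\bm y^{(j)}$ with $\|\bm y^{(j)}\|_\infty=O(\rho)$ and $b_j:=\langle\bm a,\bm y^{(j)}\rangle$ of magnitude $\le\rho\cdot 2^{-\Theta(m)}$, rescale the $b_j$ to $[0,1]$, and recurse. Telescoping $L$ levels with block sizes $m_0,\dots,m_{L-1}$ (chosen so the instance is exhausted, i.e. $\prod_\ell m_\ell=\Theta(n)$) gives a final error of roughly $(2\rho)^{L}\cdot 2^{-\Theta(m_0+\cdots+m_{L-1})}$; a single level ($m_0=n$) already achieves error $2^{-\Theta(n)}$, so the recursion itself is not the bottleneck.

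The bottleneck --- and the reason the exponent ends up at $n^{\Theta(1/\rho)}$ rather than $\Theta(n)$ --- is that each oracle call returns a vector that is only $O(\rho)$-bounded rather than $\{-1,0,1\}$-valued, and composing levels naively lets multiplicities grow to $\rho^{\Theta(L)}$, so the final vector need not be a legal NBP solution; nor can one in general repair a small bounded-coefficient combination into a small $\{-1,0,1\}$ one after the fact. The fix I would pursue is to never pass a block's output up the recursion verbatim, but to replace $\bm y^{(j)}$ by the $O(\log\rho)$ vectors given by the $\{-1,0,1\}$-digits of its signed binary expansion, so that every decoding vector handed between levels is itself $\{-1,0,1\}$-valued and touches each original coordinate only $O(\log\rho)$ times per level. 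The cost of this is precisely what limits the recursion: to keep the accumulated multiplicities under control --- together with a final clean-up pass that re-balances the remaining low-multiplicity coordinates with one more gadget call --- one is forced to $L=\Theta(\rho)$ levels with block sizes $m_\ell=n^{\Theta(1/\rho)}$, whence the $2^{-n^{\Theta(1/\rho)}}$ bound. Making all of this consistent --- simultaneously controlling the error, the multiplicities, the block counts, and the polynomial running time through the recursion, and verifying membership in $\{-1,0,1\}^n\setminus\{\bm 0\}$ --- is the main technical obstacle; the determinant/volume estimate in the gadget and the telescoping of the error are routine.
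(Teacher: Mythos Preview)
Your lattice gadget is essentially the paper's Theorem~7 (up to the harmless choice of a rank-$n$ sublattice of $\setR^{n+1}$ versus a full-rank $(n{+}1)$-dimensional lattice), and your overall plan --- gadget plus block recursion --- is the right shape. The difficulty you correctly isolate, namely that the oracle returns $\{-O(\rho),\ldots,O(\rho)\}$-vectors rather than $\{-1,0,1\}$-vectors, is exactly the crux. But your proposed fix via signed binary expansion does not close the gap.

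Here is the concrete problem. Decomposing $\bm y^{(j)}$ into its bit-vectors $\bm y^{(j,t)}\in\{-1,0,1\}^n$ gives you $\{-1,0,1\}$ objects, but the quantities $\langle\bm a,\bm y^{(j,t)}\rangle$ are \emph{not} individually small --- only the weighted sum $\sum_t 2^t\langle\bm a,\bm y^{(j,t)}\rangle$ is --- so they are not legitimate ``new numbers'' for the next level. If instead you simply track multiplicities, the situation is worse: each original coordinate sits in $O(\log\rho)$ bit-vectors, and the next oracle call weights those with coefficients up to $\rho$, so the per-coordinate coefficient becomes $O(\rho\log\rho)$, not smaller. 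Iterating this does not drive the bound toward~$1$; your claimed $L=\Theta(\rho)$ and $m_\ell=n^{\Theta(1/\rho)}$ are asserted rather than derived, and I do not see a consistent bookkeeping that simultaneously keeps the inner products shrinking and the multiplicities bounded.

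The paper supplies the missing idea as a small algebraic lemma. Having found $\bm x_\ell\in\{-k,\ldots,k\}^{I_\ell}$ with $\langle\bm a,\bm x_\ell\rangle\approx 0$, write $\bm x_\ell=\sum_{i=1}^k i\cdot\bm x_{\ell,i}$ with $\bm x_{\ell,i}\in\{-1,0,1\}^n$ of disjoint support, set $\alpha_i:=\langle\bm a,\bm x_{\ell,i}\rangle$ (so $\sum_i i\alpha_i\approx 0$), and define the \emph{single} number $b_\ell:=\alpha_r+\cdots+\alpha_k$ with $r=\lceil k/2\rceil$. The point is that \emph{every} multiple $j\cdot b_\ell$ with $|j|\le k$ can be rewritten as $\sum_i\lambda_i\alpha_i$ with $|\lambda_i|\le\max\{r-1,k-r\}=\lfloor k/2\rfloor$: for $j<r$ this is trivial, and for $j\ge r$ one subtracts off the relation $\sum_i i\alpha_i\approx 0$ once. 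Thus a $\{-k,\ldots,k\}$-oracle on the $b_\ell$'s yields, after substitution, a solution with coefficients in $\{-\lfloor k/2\rfloor,\ldots,\lfloor k/2\rfloor\}$ on the original coordinates. One recursion step halves $k$ at the cost of square-rooting the instance size; $\lceil\log k\rceil$ steps (with $k=\Theta(\rho)$) bring the coefficients down to $\{-1,0,1\}$ and the error to $2^{-n^{\Theta(1/\rho)}}$. This halving trick is what your binary-digit scheme is missing.
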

Recall that $\textrm{vol}_n([-1,1]^n) = 2^n$ and hence for such a lattice there is always an $\bm{x} \in \Lambda \setminus \{\bm{0}\}$ with
$\|\bm{x}\|_{\infty} \leq 1$. 
In particular an oracle for $\rho \leq c'\log(n)/\log \log (n)$ would imply an improvement over Karmarkar-Karp's
algorithm, where $c'>0$ is a small enough constant.
Again, we would like to stress that the NP-hardness bounds of~\cite{SVPhardness-RegevHavivSTOC07} for Shortest Vector do not apply for such lattices 
where a solution is guaranteed to exist. 

Finally, we can also prove that an oracle with exponentially small error for number balancing would provide approximations
for Minkowski's Theorem and Shortest Vector: 
\begin{theorem}\label{thm:npptomink} Suppose that there is a $\delta$-approximation for number balancing with 
$\delta \leq 2^{\sqrt{n}} / 2^{n}$. 
Then there is an $O(n^5)$-approximation for Minkowski's problem. Here it suffices to have a separation oracle for the convex body $K \subseteq \setR^n$.
\end{theorem}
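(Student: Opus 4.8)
The goal is to reduce Minkowski's problem to number balancing. Given a symmetric convex body $K \subseteq \setR^n$ (via a separation oracle) with $\textrm{vol}_n(K) > 2^n$, I want to produce a nonzero integer vector in $O(n^5) \cdot K$. The natural strategy is to first reduce the Minkowski instance to an instance of the shortest vector problem (in some norm close to Euclidean) using a sandwiching argument: by the ellipsoid method applied to the separation oracle, one can compute in polynomial time an ellipsoid $E$ with $E \subseteq K \subseteq \beta E$ for $\beta = \textrm{poly}(n)$ (a constructive John-type theorem, e.g.\ $\beta = n(n+1)$ or similar). After applying the linear transformation taking $E$ to a Euclidean ball, the problem becomes: find a short nonzero lattice vector in a lattice $\Lambda$ of determinant at most $1$ (after rescaling so that the ball has the appropriate volume), and any vector of Euclidean norm $O(\sqrt n)$ lies in $\beta E \cdot \textrm{poly} \subseteq \textrm{poly}(n) \cdot K$. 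So it suffices to find a lattice vector whose Euclidean norm is within a $\textrm{poly}(n)$ factor of $\sqrt{n} \cdot \det(\Lambda)^{1/n}$, i.e.\ within $\textrm{poly}(n)$ of the Minkowski bound.

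The core of the argument is therefore: \emph{a $\delta$-approximate number balancing oracle with $\delta \le 2^{\sqrt n}/2^n$ yields a $\textrm{poly}(n)$-approximate shortest-vector oracle for lattices with $\det(\Lambda) \le 1$}. Here is the mechanism I expect to use. Given a basis $\bm b_1,\ldots,\bm b_n$ of $\Lambda$, I want to encode the search for integer coefficients $\lambda_1,\ldots,\lambda_n$ with $\|\sum_i \lambda_i \bm b_i\|$ small as a number balancing instance. The key idea: a lattice vector $\sum_i \lambda_i \bm b_i$ being short means each of its $n$ coordinates is a near-integer-combination that is small, so I can set up numbers $a_{i,j}$ (roughly the coordinates $b_{i,j}$, suitably scaled and truncated to the relevant precision) so that a sign vector $\bm x$ with $|\langle \bm a, \bm x\rangle|$ tiny forces $\sum_i x_i \bm b_i$ to be short in all coordinates simultaneously. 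The "$\pm 1, 0$" restriction of NBP only gives coefficients in $\{-1,0,1\}$, so to reach general integer combinations one boosts the range: replace each $\bm b_i$ by the $O(\log_2 R)$ scaled copies $\bm b_i, 2\bm b_i, 4\bm b_i, \ldots$ where $R = 2^{O(\textrm{poly}(n))}$ is an a priori bound on the coefficients in an approximately shortest vector (such a bound follows from $\det(\Lambda)\le 1$ and bit-size bounds on the basis). One also uses a separate block of "slack" numbers, powers of $2$, to absorb the error in each coordinate down to the desired precision. The doubly-exponential gap between $\delta = 2^{-(n-\sqrt n)}$ and the $2^{-\textrm{poly}(n)}$ precision we can afford to demand is exactly what makes the dimension of the NBP instance polynomial: we get $N = \textrm{poly}(n)$ numbers, so $2^{\sqrt N} = 2^{\textrm{poly}(n)}$ and the $\delta$ guarantee is strong enough to pin down the coordinates.

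The main obstacle, and where the bulk of the work lies, is the encoding: ensuring that (i) a genuinely short lattice vector maps to a valid nonzero NBP solution with $|\langle\bm a,\bm x\rangle|$ below the threshold $\delta$ for the resulting instance size $N$, so the oracle is obligated to return something useful; and conversely (ii) \emph{any} $\bm x$ the oracle returns with $|\langle\bm a,\bm x\rangle|\le\delta$ can be decoded into a nonzero lattice vector of Euclidean norm $O(n^{O(1)})\cdot\det(\Lambda)^{1/n}$ — in particular that $\bm x \ne \bm 0$ translates to a \emph{nonzero} lattice vector, and that the slack/scaling blocks cannot be abused to make $\langle \bm a,\bm x\rangle$ small for a useless $\bm x$. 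Managing the bookkeeping of scalings (the weights separating the "coordinate" contributions so they don't cancel against each other, and the precision at which each $b_{i,j}$ is truncated) is delicate, and getting a clean accounting is what produces the concrete exponent $n^5$ in the approximation factor — I'd expect the reduction chain (separation oracle $\to$ ellipsoid-sandwiched ball $\to$ lattice SVP $\to$ NBP, with the $\sqrt n$ John-factor, the $\sqrt n$ Minkowski-vs-successive-minima slack, and polynomial losses in the encoding) to multiply up to roughly $n^5$, and I would track each factor to confirm.
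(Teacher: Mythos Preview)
Your high-level plan---John/ellipsoid approximation of $K$, then encode the resulting ``find a short integer combination'' problem as a single NBP instance via binary expansion of coefficients and by packing the $n$ coordinate constraints into one number---is the same skeleton the paper uses. But two concrete ingredients are missing from your sketch, and without them the parameter accounting does not close to give an $O(n^5)$ factor.

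\textbf{The LLL well-rounding step.} After John's theorem you have an ellipsoid with $\prod_i \lambda_i \geq 1$, but the individual axis lengths $\lambda_i$ (equivalently, the coefficient bound $R$ in your framing) can be arbitrary. The paper inserts an LLL-based unimodular change of coordinates (Lemma~\ref{lem:FindingUnimodularTransformation}) to force $\lambda_{\max} \leq 2^{3n/2}$. This is not cosmetic: in the reduction one takes $Q = 2^{\Theta(n)}$, the NBP instance has dimension $N = n\log Q$, and the per-constraint error picks up a factor $\rho(N)^{1/n}$. If $R = 2^{n^c}$ with $c>1$ (which is all ``bit-size bounds on the basis'' would give you---the ellipsoid coming out of a separation oracle has no intrinsic $2^{O(n)}$ bound), then $N = n^{c+1}$ and $\rho(N)^{1/n} = 2^{n^{(c-1)/2}}$ is superpolynomial, destroying the $\textrm{poly}(n)$ guarantee. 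So LLL (or some equivalent basis regularization) is a required step you did not include.

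\textbf{The multi-constraint encoding.} Your proposal to add ``a separate block of slack numbers, powers of $2$'' is exactly the trap you yourself flag: nothing stops the oracle from returning $\bm{x}$ supported only on the slack block, which decodes to the zero lattice vector. The paper avoids slack entirely. Its mechanism (Lemma~\ref{lem:kvectors}) is to \emph{round} the $i$-th constraint vector to multiples of $2n\delta_i$, scale it by $\prod_{j<i}\delta_j$, sum all $n$ rounded-and-scaled vectors into a single $\bm{c}$, and call NBP once on $\bm{c}$. Because the scales are geometrically separated, $|\langle\bm{c},\bm{x}\rangle|\le\delta$ forces each rounded inner product to be \emph{exactly} zero; the rounding error then gives $|\langle\bm{a}_i,\bm{x}\rangle|\le 2n^2\delta_i$. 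Every coordinate of the NBP instance is a genuine coefficient, so $\bm{x}\neq\bm{0}$ automatically decodes to a nonzero lattice vector (the binary-expansion step of Lemma~\ref{lem:addingQ} preserves this, since $\sum_\ell 2^{-\ell}y_{j\ell}=0$ with $y_{j\ell}\in\{-1,0,1\}$ forces all $y_{j\ell}=0$). This discretize-and-stack idea is the missing piece in your plan.
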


\section{Reducing Number Balancing to Minkowski's Theorem}

In this section we will show how to solve NBP with an oracle for  Minkowski's Theorem.
The idea is to consider a hypercube intersected with the constraint $|\langle\bm{a},\bm{x}\rangle|\le \delta$, and to show that this set has large enough volume. If we have an exact Minkowski oracle, this gives us $\bm{x}\in \{-1,0,1\}^n\setminus\{\bm{0}\}$ as desired. Here we state a more general version which uses only a $\rho$-approximate Minkowski oracle, and then show how we can use this more general version to solve NBP with a weaker bound.

\begin{theorem}\label{thm:NPPoracle}
Suppose we have a $\rho$-approximate Minkowski oracle, and let $k > 0$ be any positive integer. Then, for any $\bm{a}\in [0,1]^n$, there is a polynomial time algorithm to find $\bm{x} \in \setZ^n \setminus \{\bm{0}\}$ with $\|\bm{x}\|_\infty \leq k$ and
so that $|\left<\bm{a},\bm{x}\right>| \leq n\left(\frac{\rho}{k+1}\right)^{n-1}$. 
\end{theorem}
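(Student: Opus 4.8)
The plan is to feed the Minkowski oracle the polytope obtained by intersecting a large axis-parallel box with the slab $\{\bm{x}:|\langle\bm{a},\bm{x}\rangle|\le s\}$, with $s$ tuned just small enough that the volume still exceeds $2^n$. First I would clear away the degenerate cases: if some $a_i=0$, return $\bm{e}_i$; if $n\bigl(\tfrac{\rho}{k+1}\bigr)^{n-1}\ge 1$, return $\bm{e}_1$ (its value is $a_1\le 1$); and if $\bm{a}=\bm{1}$, return $\bm{e}_1-\bm{e}_2$. Otherwise $\bm{a}\in(0,1]^n$ with $\bm{a}\neq\bm{1}$ and $c:=\tfrac{k+1}{\rho}>1$, and I set
\[
  K\ :=\ \Bigl\{\bm{x}\in\setR^n:\ \|\bm{x}\|_\infty<c,\quad |\langle\bm{a},\bm{x}\rangle|\le s\Bigr\},
\]
a symmetric polytope with $2n+2=O(n)$ facets. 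Running the $\rho$-approximate Minkowski oracle on $K$ returns $\bm{x}\in(\rho K)\cap(\setZ^n\setminus\{\bm{0}\})$, so $\tfrac1\rho\bm{x}\in K$; hence $\|\bm{x}\|_\infty<\rho c=k+1$, which forces $\|\bm{x}\|_\infty\le k$ since $\bm{x}$ is integral, and $|\langle\bm{a},\bm{x}\rangle|\le\rho s$. Everything therefore reduces to choosing $s$ as small as possible so that $\textrm{vol}_n(K)>2^n$.

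The crux is a volume lower bound for a box intersected with a slab, which I would derive from Brunn's theorem rather than by integrating out coordinates one at a time. For a symmetric convex body $C\subseteq\setR^n$, the pushforward of Lebesgue measure on $C$ under $\bm{x}\mapsto\langle\bm{a},\bm{x}\rangle$ has a density $g$ that is even (because $C=-C$) and with $g^{1/(n-1)}$ concave on its support (Brunn's theorem, a consequence of Brunn--Minkowski); hence $g$ is non-increasing on $[0,\infty)$, so $\psi(s):=\textrm{vol}_n\bigl(C\cap\{|\langle\bm{a},\bm{x}\rangle|\le s\}\bigr)=\int_{-s}^{s}g$ has non-increasing derivative $2g(s)$ and is concave on $[0,\infty)$. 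Since $\psi(0)=0$ and $\psi(M)=\textrm{vol}_n(C)$ for $M:=\max_{\bm{x}\in C}|\langle\bm{a},\bm{x}\rangle|$, concavity gives $\psi(s)\ge\tfrac{s}{M}\,\textrm{vol}_n(C)$ for $0\le s\le M$. Taking $C=[-c,c]^n$, so $M=c\|\bm{a}\|_1$ and $\textrm{vol}_n(C)=(2c)^n$, yields
\[
  \textrm{vol}_n(K)\ \ge\ \frac{s}{c\|\bm{a}\|_1}\,(2c)^n\ =\ \frac{2s}{\|\bm{a}\|_1}\,(2c)^{n-1}\qquad(0\le s\le c\|\bm{a}\|_1),
\]
and $\textrm{vol}_n(K)=(2c)^n>2^n$ if instead $s\ge c\|\bm{a}\|_1$.

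Plugging in, I take $s$ just above $\tfrac{\|\bm{a}\|_1}{c^{\,n-1}}=\|\bm{a}\|_1\bigl(\tfrac{\rho}{k+1}\bigr)^{n-1}$, which is $\le n\bigl(\tfrac{\rho}{k+1}\bigr)^{n-1}$ since $\|\bm{a}\|_1\le n$ and is $\le c\|\bm{a}\|_1$ since $c>1$; then the displayed bound gives $\textrm{vol}_n(K)>\tfrac{2}{c^{\,n-1}}(2c)^{n-1}=2^n$, so the oracle returns $\bm{x}\in\setZ^n\setminus\{\bm{0}\}$ with $\|\bm{x}\|_\infty\le k$ and $|\langle\bm{a},\bm{x}\rangle|\le\rho s$, of the claimed order $n\bigl(\tfrac{\rho}{k+1}\bigr)^{n-1}$; the algorithm is polynomial time because $K$ has an explicit description of polynomial bit-size. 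The step I expect to be the genuine obstacle is precisely this volume estimate: a direct coordinate-by-coordinate (``differencing''-style) computation degrades catastrophically when the entries of $\bm{a}$ are nearly equal --- then $\langle\bm{a},\bm{x}\rangle$ concentrates, every one-dimensional fiber of $K$ is tiny, and telescoping is useless --- whereas the Brunn/concavity viewpoint turns the whole question into the single clean factor $\tfrac{s}{\|\bm{a}\|_1}$, which is exactly what makes the count beat $2^n$.
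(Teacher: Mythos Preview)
Your proof is correct and follows essentially the same route as the paper's: both take $K$ to be the cube $(-\tfrac{k+1}{\rho},\tfrac{k+1}{\rho})^n$ intersected with the slab $\{|\langle\bm{a},\bm{x}\rangle|\le\delta\}$, both lower-bound $\mathrm{vol}_n(K)$ by observing that the hyperplane-slice volumes $\mathrm{vol}_{n-1}(\{\langle\bm{a},\bm{x}\rangle=\alpha\}\cap C)$ are non-increasing in $|\alpha|$ and integrating, and both then feed $K$ to the oracle. The only cosmetic differences are that you invoke Brunn explicitly (the paper just says ``by convexity'', which already suffices for the monotonicity), you track $\|\bm{a}\|_1$ instead of the cruder $n$, and you separate out some trivial cases; your closing remark anticipating that the paper might do a ``coordinate-by-coordinate'' computation is off --- it uses exactly the Brunn/concavity slice argument you describe.
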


\begin{proof}
For ease of notation, let 
\[ \delta := n\left(\frac{\rho}{k+1}\right)^{n-1}.\]
Now consider the body
\[
  K := \left\{ \bm{x} \in \left(-\frac{k+1}{\rho},\frac{k+1}{\rho}\right)^n : |\left<\bm{a},\bm{x}\right>| \leq \delta \right\}.
\]
Obviously this is a symmetric convex body. For $\alpha \in [-\delta,\delta]$, consider the $(n-1)$-dimensional slice
\[
  K(\alpha) := \left\{ \bm{x} \in \left(-\frac{k+1}{\rho},\frac{k+1}{\rho}\right)^n: \left<\bm{a},\bm{x}\right> = \alpha \right\}
\]
of it. Let us consider the $(n-1)$-dimensional volume $\mathrm{vol}_{n-1}(K(\alpha))$ of that slice.
Then we can write the volume of $K$ as
\[
  \mathrm{vol}_n(K) = \int_{-\delta}^{\delta} \mathrm{vol}_{n-1}(K(\alpha)) \; d\alpha.
\]
Moreover
\[
  \left(\frac{2(k+1)}{\rho}\right)^n = \mathrm{vol}_n\left(-\frac{k+1}{\rho},\frac{k+1}{\rho}\right)^n  = \int_{-\frac{n(k+1)}{\rho}}^{\frac{n(k+1)}{\rho}} \mathrm{vol}_{n-1}(K(\alpha)) \; d\alpha.
\]
since the slices are empty if $|\alpha| > \frac{n(k+1)}{\rho}$.
By symmetry $\mathrm{vol}_{n-1}(K(\alpha)) = \mathrm{vol}_{n-1}(K(-\alpha))$. Moreover, by convexity of $K$, 
for $\alpha \geq 0$, the quantity $\mathrm{vol}_{n-1}(K(\alpha))$ is monotonically non-increasing.
Thus
\begin{align*}
 \mathrm{vol}_n(K) &= \int_{-\delta}^{\delta} \mathrm{vol}_{n-1}(K(\alpha)) \, d\alpha \geq \frac{\delta}{\left(\frac{n(k+1)}{\rho}\right)}
  \cdot \int_{-\frac{n(k+1)}{\rho}}^{\frac{n(k+1)}{\rho}} \mathrm{vol}_{n-1}(K(\alpha)) \; d\alpha = \frac{\delta}{\left(\frac{n(k+1)}{\rho}\right)} \cdot \left(\frac{2(k+1)}{\rho}\right)^n.
\end{align*}

Now, using the fact that $\delta = n\left(\frac{\rho}{(k+1)}\right)^{n-1}$, we get $\mathrm{vol}_n(K) \geq 2^n$. Hence, using our $\rho$-approximate Minkowski oracle, we can find a vector
\[\bm{x} \in (\rho K \cap \setZ^n) \setminus \{ \bm{0}\} = \left((-(k+1),(k+1))^n \cap \setZ^n\right) \setminus \{ \bm{0}\}.\] 
In particular, this gives $\bm{x}\in \setZ^n$ with $|\langle \bm{a},\bm{x}\rangle|\le n\left(\frac{\rho}{k+1}\right)^{n-1}$ and $\|\bm{x}\|_\infty \leq k$. 
\end{proof}
The bound in Theorem~\ref{thm:NPPoracle} can be strengthened by a $\sqrt{n}$ factor by using concentration of measure
arguments. We omit this here.

Now suppose, for instance, that we have a $(2-\epsilon)$-approximate oracle for some $\epsilon \in (0,1]$. Then we can pick $k = 1$ and get
\[|\langle \bm{a},\bm{x}\rangle|\le n\left(\frac{2-\epsilon}{2}\right)^{n-1} \leq n \cdot \exp\left(-\frac{\epsilon(n-1)}{2}\right) = 2^{-\Theta(\epsilon n)},\]
with $\|\bm{x}\|_\infty \leq 1$ and $\bm x \in \mathbb Z^n \setminus \lbrace{\bm 0\rbrace}$.
However, this line of arguments breaks down if we only have access to a $\rho$-approximation for $\rho \geq 2$
as these would in general not produce feasible solutions for number balancing. 

It turns out that we can design a \emph{recursive self-reduction}. 
The main technical argument is to transform an algorithm that finds $\bm{x}\in \setZ^n\setminus\{\bm{0}\}$ with $\|\bm{x}\|_\infty\le k$, into an algorithm that finds vectors $\bm{x}\in \setZ^n\setminus\{\bm{0}\}$ with $\|\bm{x}\|_\infty\le \frac{k}{2}$, with a bounded decay in the error $|\langle\bm{a},\bm{x}\rangle|$. Applying this recursively gives the following lemma.
\begin{lemma}\label{lem:fullselfreduction}
Suppose that there is a polynomial-time algorithm that for any $\bm{a}' \in [-1,1]^n$ finds a 
vector $\bm{x}' \in \{ -k,\ldots,k\}^n \setminus \{ \bm{0} \}$ with $|\left<\bm{a}',\bm{x}'\right>| \leq 2^{-n}$. 
If $k\leq \frac {\log n} {6\log \log n}$,
then there is also a polynomial time algorithm 
that for any $\bm{a} \in [-1,1]^n$ finds 
a vector $\bm{x} \in \setZ^n$ with $|\left<\bm{a},\bm{x}\right>| \leq 2^{-n^{\frac 1 {3k}}}$
and $\bm{x}\in \{ -1,0,1\}^n \setminus \{ \bm{0} \}$.
\end{lemma}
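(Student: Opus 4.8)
The plan is to carry out the ``recursive self-reduction'' announced right before the lemma: isolate a single \emph{halving step} that turns an algorithm producing solutions with $\|\bm{x}\|_\infty\le k$ into one producing solutions with $\|\bm{x}\|_\infty\le \lceil k/2\rceil$ (at the price of a controlled loss in the error and some extra recursive work), and then apply it $\lceil\log_2 k\rceil$ times until the coefficient bound reaches $1$. Throughout, the ambient dimension of the \emph{output} stays $n$; only the internal oracle calls happen in smaller dimensions.

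For the halving step, given $\bm{a}\in[-1,1]^N$ I would split $[N]$ into $m$ consecutive blocks $G_1,\dots,G_m$ of size $s=N/m$. Inside each block $G_j$ I produce a \emph{gadget}: a vector $\bm{w}_j\in\{-1,0,1\}^{G_j}\setminus\{\bm{0}\}$ with $|\langle \bm{a}|_{G_j},\bm{w}_j\rangle|\le \gamma$ for some tiny $\gamma$, obtained recursively (the search for $\bm{w}_j$ is itself a number-balancing instance on $s\le N$ coordinates; at the smallest scale, where the block size is $O(\log n)$, I find it by brute force / pigeonhole, which is polynomial there). Splitting the support of $\bm{w}_j$ into two nonempty parts gives disjoint-support, nonzero vectors $\bm{z}_j,\bm{z}_j'\in\{-1,0,1\}^{G_j}$ with $\bm{z}_j-\bm{z}_j'=\bm{w}_j$, hence with $\bm{a}$-values that are nearly equal (within $\gamma$). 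Setting $b_j:=\langle\bm{a}|_{G_j},\bm{z}_j\rangle\in[-s,s]$ and running the given $k$-bounded algorithm on the scaled vector $(b_1,\dots,b_m)\in[-1,1]^m$ yields $\bm{y}\in\{-k,\dots,k\}^m\setminus\{\bm{0}\}$ with $|\sum_j y_j b_j|$ tiny. Now lift $\bm{y}$ to $\bm{x}\in\setZ^N$ by putting, on block $G_j$, the coefficient vector $\lceil y_j/2\rceil\,\bm{z}_j+\lfloor y_j/2\rfloor\,\bm{z}_j'$. Because $\bm{z}_j,\bm{z}_j'$ have disjoint supports inside $G_j$ and the two halves of the split have absolute value $\le\lceil k/2\rceil$, we get $\|\bm{x}\|_\infty\le\lceil k/2\rceil$; because each gadget is nonzero and $\bm{y}\neq\bm{0}$, we get $\bm{x}\neq\bm{0}$; and since
\[
\langle\bm{a},\bm{x}\rangle=\sum_j y_j b_j+\sum_j \lfloor y_j/2\rfloor\bigl(\langle\bm{a}|_{G_j},\bm{z}_j'\rangle-b_j\bigr),
\]
the error is at most $(\text{$k$-oracle error on $m$ scaled numbers})\cdot s + m\lceil k/2\rceil\,\gamma$.

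For the bookkeeping I would track, level by level, the error $\epsilon_i(\cdot)$ of the $\lceil k/2^i\rceil$-bounded algorithm as a function of dimension. The hypothesis gives $\epsilon_0(N)=2^{-N}$ (exponentially small), so the point of the block decomposition is to invoke this very strong oracle at the largest dimension $m$ that the polynomiality budget allows, while keeping the two cheaper contributions — the scaling factor $s$ and the accumulated $m\lceil k/2\rceil\gamma$ terms — lower-order; balancing the block size against the gadget error at each of the $\lceil\log_2 k\rceil$ levels is what drives the effective dimension down roughly by a square root per step and leaves a final error of the shape $2^{-n^{\Theta(1/k)}}$, and a careful accounting of the constants (the $s$-factors, the $\lfloor y_j/2\rfloor$ bounds, and the $\log_2 k$ levels of composition) pins the exponent at $1/(3k)$. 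The restriction $k\le\frac{\log n}{6\log\log n}$ is exactly what guarantees that every dimension appearing in the nested recursion stays above the (polylogarithmic) threshold at which the base-case brute force is simultaneously correct and runs in $\mathrm{poly}(n)$ time, and that the whole recursion tree has polynomial size.

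The main obstacle, and the step I expect to demand the most care, is the error accounting: producing gadgets whose value-gap $\gamma$ is small enough that the $m\lceil k/2\rceil\gamma$ terms, summed over all $\lceil\log_2 k\rceil$ halvings, do not swamp the exponentially small contribution coming from the $k$-oracle — a naive pigeonhole over a block needs the block to be large to make $\gamma$ small, but a large block makes the search non-polynomial, so the gadget search must itself be fed back into the recursion at a strictly smaller scale. Getting this right means setting up the two-parameter recursion (in $k$ and in dimension) so that (i) every gadget search is either polynomial by brute force or a strictly smaller recursive instance, (ii) the errors introduced at each level telescope rather than blow up, and (iii) the coefficient bound genuinely halves at each level while the support grows only polynomially; verifying that all three can hold simultaneously under $k\le\frac{\log n}{6\log\log n}$ is the technical heart of the proof.
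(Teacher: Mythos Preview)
Your high-level plan---iterate a halving step $\lceil\log_2 k\rceil$ times---is the paper's. The divergence, and the gap, is in the halving mechanism itself.

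Your device (split a $\{-1,0,1\}$-gadget $\bm{w}_j=\bm{z}_j-\bm{z}_j'$ into disjoint-support twins and distribute $y_j=\lceil y_j/2\rceil+\lfloor y_j/2\rfloor$ across them) does halve the coefficient bound, but it requires on every block, at every one of the $\lceil\log_2 k\rceil$ levels, a gadget that is \emph{already} a $\{-1,0,1\}$ solution---precisely the object the lemma is constructing. You propose to break this circularity with a second recursion in dimension and a brute-force base at size $O(\log n)$; but the gadget error $\gamma$ enters your estimate as $m\lceil k/2\rceil\gamma$, so $\gamma$ must be comparable to $2^{-m}$, which forces the block size $s$ far above the brute-force range and throws the gadget construction back onto the still-unbuilt recursion. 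You correctly flag closing this two-parameter recursion as ``the technical heart of the proof,'' but the proposal does not close it, and it is not clear it can be balanced to yield the exponent $1/(3k)$ rather than something worse.

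The paper avoids the circularity entirely by extracting the block gadget from the $k$-oracle itself, not from a $\{-1,0,1\}$ solution. On each block $I_\ell$ of size $\sqrt{n}$ it calls the current oracle to get $\bm{x}_\ell\in\{-k,\dots,k\}^{I_\ell}$ with $\langle\bm{a},\bm{x}_\ell\rangle\approx 0$, then writes $\bm{x}_\ell=\sum_{i=1}^k i\,\bm{x}_{\ell,i}$ with disjoint-support $\bm{x}_{\ell,i}\in\{-1,0,1\}^n$. Setting $\alpha_i:=\langle\bm{a},\bm{x}_{\ell,i}\rangle$ one has $\sum_i i\,\alpha_i\approx 0$, and the elementary identity of Lemma~\ref{lem:RepresentationWithSmallCoefficients} shows that with $\beta:=\alpha_r+\dots+\alpha_k$ every $j\beta$, $|j|\le k$, equals $\sum_i\lambda_i\alpha_i$ with $|\lambda_i|\le\max\{r-1,k-r\}$. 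A second call of the same oracle on the $\sqrt{n}$ values $b_\ell:=\beta_\ell$ and the lift $\bm{x}:=\sum_{\ell,i}\lambda_{\ell,i}\bm{x}_{\ell,i}$ give $\|\bm{x}\|_\infty\le\lfloor k/2\rfloor$ with error $2^{-f(\sqrt{n})/2}$ (Lemma~\ref{lem:halvecoefficient}). Because both stages use the \emph{same} current-level oracle, the recursion is one-parameter (in $k$ only); the bookkeeping collapses to $f_t(n)=n^{2^{-t}}/2^t$, from which both the bound $2^{-n^{1/(3k)}}$ and the threshold $k\le\frac{\log n}{6\log\log n}$ drop out directly.
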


The way we do the self-reduction is the following.
We partition our set of $n$ numbers into subsets of size $\sqrt{n}$.
First, for each subset $\ell$, we find a number $b_\ell\ne 0$ for which we can (approximately) express 
$b_\ell,2b_\ell,\ldots,kb_\ell$
as linear combinations of elements of that subset using only coefficients in $\{-\lfloor \frac{k}{2}\rfloor,\ldots\lfloor\frac{k}{2}\rfloor\}$.
We then run our assumed algorithm on $b_1,\ldots,b_{\sqrt{n}}$ to obtain $\bm{y}\in \{-k,\ldots,k\}^n$ with $\langle \bm{b},\bm{y}\rangle=\sum_{\ell=1}^{\sqrt{n}}y_\ell b_\ell$ being small.
Since each of the summands can be expressed more efficiently in terms of our original set of numbers, we obtain a good solution $\bm{x}$ with coefficients in $\{-\lfloor \frac{k}{2}\rfloor,\ldots\lfloor\frac{k}{2}\rfloor\}$. 

%
%
%
The following two lemmas go through this argument more precisely. 
The first gives a condition under which we can find a more efficient coefficient representation for a number $\beta$.
In the second, we will show that an application of the assumed algorithm will actually allow us to satisfy this condition, and so we are able to construct the stronger oracle.
Note that the interesting parameter 
choice is $r := \lceil k/2 \rceil$, so that the size of the coefficients is halved.

\begin{lemma} \label{lem:RepresentationWithSmallCoefficients}
Let $r,k \in \setN$ be parameters with $r<k$.
Let $\alpha_1,\ldots,\alpha_k \in \setR$ so that $\sum_{i=1}^k i \cdot \alpha_i = 0$
and abbreviate $\beta := \alpha_{r} + \ldots + \alpha_k$. 
Then $\{ j \cdot \beta \mid j=-k,\ldots,k\} \subseteq \{ \sum_{i=1}^k \lambda_i \alpha_i \mid \lambda_i \in \setZ\textrm{ and }|\lambda_i| \leq \max\{ r-1,k-r\}\textrm{ for }i=1,\ldots,k\}$.
\end{lemma}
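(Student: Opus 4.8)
The plan is to exploit the single linear relation $\sum_{i=1}^k i\alpha_i = 0$ to rewrite $j\beta$ in many equivalent ways and then keep the most economical one. Since $\beta = \sum_{i=r}^k \alpha_i$, for every integer $c$ we get the identity
\[
  j\beta \;=\; \sum_{i=r}^k j\alpha_i - c\sum_{i=1}^k i\alpha_i \;=\; \sum_{i=1}^k \lambda_i^{(j,c)}\alpha_i, \qquad \lambda_i^{(j,c)} := j\cdot[i\ge r] - c\,i,
\]
where the middle step uses $\sum_{i=1}^k i\alpha_i = 0$ and $[\,\cdot\,]$ is the Iverson bracket. Each $\lambda_i^{(j,c)}$ is an integer, so the whole task reduces to showing that for every $j \in \{-k,\ldots,k\}$ there is a choice $c = c(j) \in \setZ$ making $|\lambda_i^{(j,c)}| \le M := \max\{r-1,\,k-r\}$ for all $i$. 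Replacing $(j,c)$ by $(-j,-c)$ negates every coefficient, so it is enough to handle $j \ge 0$.

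Next I would show that $c(j)\in\{0,1\}$ always suffices, splitting the range $0\le j\le k$ at the threshold $M$. For $0 \le j \le M$ take $c = 0$: then $\lambda_i^{(j,0)} \in \{0,j\}$, bounded by $M$ by assumption. For $M < j \le k$ take $c = 1$: then $\lambda_i^{(j,1)} = j - i$ for $i \ge r$ and $\lambda_i^{(j,1)} = -i$ for $i < r$. The second kind has absolute value $i \le r-1 \le M$, and since $i\mapsto j-i$ is monotone it is enough to bound the first kind at the two endpoints $i = r$ and $i = k$, i.e.\ to check $j - r \le M$ and $j - k \ge -M$.

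The only point at which the exact value of $M$ matters is verifying these two endpoint inequalities for all $j \in (M,k]$, and that is where I expect the (mild) crux to lie. The upper bound $j \le M + r$ is immediate from $j\le k$ and $M \ge k-r$. The lower bound $j \ge k-M$ needs $M+1 \ge k-M$, i.e.\ $M \ge (k-1)/2$; this follows from a two-line case split: if $k-r \ge (k-1)/2$ then $M\ge k-r$ gives it, and otherwise $k-r<(k-1)/2$ forces $r>(k+1)/2$, hence $M \ge r-1 > (k-1)/2$. This produces the required $c(j)$ in every case and finishes the proof; the remaining details (integrality of the $\lambda_i$, the negation symmetry, and the trivial case $j=0$) are routine bookkeeping.
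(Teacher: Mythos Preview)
Your proof is correct and follows essentially the same approach as the paper: write $j\beta$ as $\sum_{i\ge r} j\alpha_i$ minus $c$ copies of the relation $\sum_i i\alpha_i=0$, with $c\in\{0,1\}$. The only cosmetic difference is the threshold: the paper switches from $c=0$ to $c=1$ at $j=r$ rather than at $j=M+1$, which makes the endpoint check $|j-i|\le k-r$ for $i,j\in\{r,\ldots,k\}$ immediate and spares your separate verification that $M\ge (k-1)/2$.
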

\begin{proof}
By symmetry it suffices to consider $j \geq 0$.
For $j \in \{ 0,\ldots,r-1\}$, we can obviously write 
\[
j \cdot \beta = j \cdot \alpha_{r} + \ldots + j \cdot \alpha_k.
\]
Now consider $j \in \{ r,\ldots,k\}$. The trick is to use that
\[
r \cdot \beta + \sum_{i=1}^{r-1} i \cdot \alpha_i + \sum_{i=r}^k (i-r) \cdot \alpha_i = \sum_{i=1}^k i \cdot \alpha_i = 0 \quad \Rightarrow 
  \quad r \cdot \beta = \sum_{i=1}^{r-1} (-i) \cdot \alpha_i + \sum_{i=r}^{k} (r-i) \cdot \alpha_i. 
\]
Then
\[
  j \cdot \beta = (j-r) \cdot \beta + r \cdot \beta = \sum_{i=1}^{r-1} (-i) \cdot \alpha_i + \sum_{i=r}^{k} \underbrace{[(j-r)+(r-i)]}_{=j-i} \cdot \alpha_i 
\]
If we inspect the size of the used coefficients, then for $i \in \{ 1,\ldots,r-1\}$ we have $|-i| \leq r-1$ and for $i \in \{ r,\ldots,k\}$ we have $|j-i| \leq k-r$. 
\end{proof}
Note that if we have the weaker assumption of $|\sum_{i=1}^k i \cdot \alpha_i| \leq \delta$, then 
for any $j \in \{ 0,\ldots,k\}$ one can find coefficients $\lambda_i$ with $|j \cdot \beta - \sum_{i=1}^k \lambda_i \alpha_i| \leq \delta$.

\begin{lemma}\label{lem:halvecoefficient}
Let $k,r \in \setN$ be parameters with $0<r<k$.
Let $f:\mathbb{N}\rightarrow {R}$ be a non-negative function such that $f(n)\geq 4\log n$.
Suppose that there is a polynomial-time algorithm that for any $\bm{a}' \in [-1,1]^n$ finds a 
vector $\bm{x}' \in \{ -k,\ldots,k\}^n \setminus \{ \bm{0} \}$ with $|\left<\bm{a}',\bm{x}'\right>| \leq 2^{-f(n)}$. Then there is also a polynomial time algorithm 
that for any $\bm{a} \in [-1,1]^n$ finds 
a vector $\bm{x} \in \setZ^n\setminus \{ \bm{0} \}$ with $|\left<\bm{a},\bm{x}\right>| \leq 2^{-f(\sqrt{n})/2}$
and $\|\bm{x}\|_{\infty} \leq \max\{ r-1,k-r\}$.
\end{lemma}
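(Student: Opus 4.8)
The plan is to make precise the ``partition into $\sqrt{n}$ groups'' strategy sketched above, invoking the assumed algorithm twice at scale $m=\sqrt{n}$: once inside each group to manufacture numbers $\alpha^{(\ell)}_1,\dots,\alpha^{(\ell)}_k$ on which Lemma~\ref{lem:RepresentationWithSmallCoefficients} can be applied, and once on the resulting vector $\bm{b}=(\beta_1,\dots,\beta_{\sqrt n})$. First I would partition $[n]$ into groups $G_1,\dots,G_{\sqrt n}$ of size $\sqrt n$ (replace $\sqrt n$ by $\lceil\sqrt n\rceil$ throughout). For each $\ell$, run the assumed algorithm in dimension $\sqrt n$ on $\bm{a}^{(\ell)}:=(a_j)_{j\in G_\ell}\in[-1,1]^{\sqrt n}$, obtaining $\bm{z}^{(\ell)}\in\{-k,\dots,k\}^{G_\ell}\setminus\{\bm{0}\}$ with $|\langle\bm{a}^{(\ell)},\bm{z}^{(\ell)}\rangle|\le 2^{-f(\sqrt n)}$. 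Grouping coordinates by the value of $z^{(\ell)}_j$, set $A^{(\ell)}_v:=\sum_{j\in G_\ell:\,z^{(\ell)}_j=v}a_j$ and $\alpha^{(\ell)}_v:=A^{(\ell)}_v-A^{(\ell)}_{-v}$ for $v=1,\dots,k$. The key observation is the identity $\sum_{v=1}^k v\,\alpha^{(\ell)}_v=\langle\bm{a}^{(\ell)},\bm{z}^{(\ell)}\rangle$, so $|\sum_v v\,\alpha^{(\ell)}_v|\le 2^{-f(\sqrt n)}$ --- exactly the (approximate) hypothesis of Lemma~\ref{lem:RepresentationWithSmallCoefficients}. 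Put $\beta_\ell:=\alpha^{(\ell)}_r+\dots+\alpha^{(\ell)}_k$; since each $|a_j|\le 1$ we get $|\beta_\ell|\le\sum_{v=1}^k|\alpha^{(\ell)}_v|\le|\{j\in G_\ell: z^{(\ell)}_j\ne 0\}|\le\sqrt n$.

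In the main case, assume $|\beta_\ell|>2^{-f(\sqrt n)/2}$ for every $\ell$. Let $N:=\lceil\sqrt n\rceil$ and run the assumed algorithm on $\bm{b}/N=(\beta_\ell/N)_\ell\in[-1,1]^{\sqrt n}$, getting $\bm{y}\in\{-k,\dots,k\}^{\sqrt n}\setminus\{\bm{0}\}$ with $|\langle\bm{b},\bm{y}\rangle|\le N\cdot 2^{-f(\sqrt n)}$. For each $\ell$, apply Lemma~\ref{lem:RepresentationWithSmallCoefficients} (its approximate version from the remark, and symmetry for $y_\ell<0$) to write $y_\ell\beta_\ell=\sum_{v=1}^k\lambda^{(\ell)}_v\alpha^{(\ell)}_v+e_\ell$ with $|\lambda^{(\ell)}_v|\le\max\{r-1,k-r\}$ and $|e_\ell|\le 2^{-f(\sqrt n)}$. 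Expanding $\alpha^{(\ell)}_v=A^{(\ell)}_v-A^{(\ell)}_{-v}$ rewrites $\sum_v\lambda^{(\ell)}_v\alpha^{(\ell)}_v$ as $\langle\bm{a}^{(\ell)},\bm{x}^{(\ell)}\rangle$ with $x^{(\ell)}_j:=\mathrm{sign}(z^{(\ell)}_j)\cdot\lambda^{(\ell)}_{|z^{(\ell)}_j|}$ (and $0$ when $z^{(\ell)}_j=0$), so $\|\bm{x}^{(\ell)}\|_\infty\le\max\{r-1,k-r\}$. Letting $\bm{x}$ be the concatenation of the $\bm{x}^{(\ell)}$, we get $\langle\bm{a},\bm{x}\rangle=\langle\bm{b},\bm{y}\rangle-\sum_\ell e_\ell$, hence $|\langle\bm{a},\bm{x}\rangle|\le N2^{-f(\sqrt n)}+\sqrt n\,2^{-f(\sqrt n)}\le 2\sqrt n\cdot 2^{-f(\sqrt n)}\le 2^{-f(\sqrt n)/2}$, where the last inequality uses $f(\sqrt n)\ge 4\log\sqrt n=2\log n\ge 2+\log n$ (trivial for small $n$).

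It remains to handle the degenerate case and to certify $\bm{x}\ne\bm{0}$. If some $|\beta_\ell|\le 2^{-f(\sqrt n)/2}$, I would output a solution directly: the vector $\bm{w}^{(\ell)}\in\{-1,0,1\}^{G_\ell}$ carrying $+1$ on $\{j:z^{(\ell)}_j\ge r\}$ and $-1$ on $\{j:z^{(\ell)}_j\le -r\}$ has $\langle\bm{a}^{(\ell)},\bm{w}^{(\ell)}\rangle=\beta_\ell$, and if $\bm{w}^{(\ell)}\ne\bm{0}$ it is a valid output ($\|\bm{w}^{(\ell)}\|_\infty=1\le\max\{r-1,k-r\}$ since $k\ge 2$); otherwise every nonzero entry of $\bm{z}^{(\ell)}$ has absolute value $<r$, so $\bm{z}^{(\ell)}$ itself is a nonzero vector with $\|\bm{z}^{(\ell)}\|_\infty\le r-1$ and small inner product --- return that (padded with zeros). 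In the main case, pick $\ell$ with $y_\ell\ne 0$; then $|\langle\bm{a}^{(\ell)},\bm{x}^{(\ell)}\rangle|\ge|y_\ell\beta_\ell|-|e_\ell|\ge|\beta_\ell|-2^{-f(\sqrt n)}>2^{-f(\sqrt n)/2}-2^{-f(\sqrt n)}>0$, so $\bm{x}^{(\ell)}\ne\bm{0}$ and hence $\bm{x}\ne\bm{0}$. The procedure makes $\sqrt n+1$ oracle calls and polynomial bookkeeping, so it runs in polynomial time.

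I expect the only genuinely nontrivial point to be the identity $\sum_v v\,\alpha^{(\ell)}_v=\langle\bm{a}^{(\ell)},\bm{z}^{(\ell)}\rangle$: it is what lets an oracle returning \emph{arbitrary} coefficients in $\{-k,\dots,k\}$ supply the rigidly structured coefficients $1,2,\dots,k$ demanded by Lemma~\ref{lem:RepresentationWithSmallCoefficients}. After that, the delicate part is purely quantitative --- ensuring $\bm{x}\ne\bm{0}$ while keeping $\|\bm{x}\|_\infty\le\max\{r-1,k-r\}$ (which is strictly smaller than $k$), and fitting the accumulated error $2\sqrt n\cdot 2^{-f(\sqrt n)}$ under the target $2^{-f(\sqrt n)/2}$; the threshold $2^{-f(\sqrt n)/2}$ separating the degenerate and main cases is exactly what reconciles these, and the hypothesis $f(n)\ge 4\log n$ is precisely what is needed (with a hair of slack) for the error bound.
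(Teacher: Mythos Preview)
Your proposal is correct and follows essentially the same approach as the paper's proof: partition into $\sqrt{n}$ blocks, call the oracle in each block, decompose the resulting vector by absolute value of its entries (your $\alpha^{(\ell)}_v$ are exactly the paper's $\langle\bm a,\bm x_{\ell,i}\rangle$), define $\beta_\ell=b_\ell$ as the tail sum, call the oracle once more on $\bm b$, and then use Lemma~\ref{lem:RepresentationWithSmallCoefficients} blockwise. The only cosmetic differences are that the paper checks the case $\|\bm z^{(\ell)}\|_\infty\le r-1$ up front rather than as a sub-case of ``$\beta_\ell$ small'', and uses the threshold $2^{-f(\sqrt n)}$ instead of your $2^{-f(\sqrt n)/2}$ for the degenerate branch; both choices work, and your nonvanishing argument for $\bm x$ is in fact spelled out in slightly more detail than the paper's.
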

\begin{proof}
Let $\bm{a} \in [-1,1]^n$ be the given vector of numbers.
Split $[n]$ into blocks $I_1,\ldots,I_{\sqrt{n}}$ each of size $|I_{\ell}| = \sqrt{n}$.
For each block $I_{\ell}$ we use the oracle to find a vector $\bm{x}_{\ell} \in \{ -k,\ldots,k\}^n\setminus\{\bm{0}\}$ with
$\textrm{supp}(\bm{x}_{\ell}) \subseteq I_{\ell}$ so that $|\left<\bm{a},\bm{x}_{\ell}\right>| \leq 2^{-f(\sqrt{n})}$. If for any $\ell$ one has $\|\bm{x}_{\ell}\|_{\infty} \leq r-1$, 
then we simply return $\bm{x} := \bm{x}_{\ell}$ and are done.
Otherwise, we write the vector as $\bm{x}_{\ell} = \sum_{i=1}^k i \cdot \bm{x}_{\ell,i}$ with
vectors $\bm{x}_{\ell,1},\ldots,\bm{x}_{\ell,k} \in \{ -1,0,1\}^n$. 
Note that these vectors will have disjoint support and $\textrm{supp}(\bm{x}_{\ell,1}),\ldots,\textrm{supp}(\bm{x}_{\ell,k}) \subseteq I_{\ell}$.
Moreover we know that for every $\ell$ there is at least one index $i \in \{ r,\ldots,k\}$
with $\bm{x}_{\ell,i} \neq \bm{0}$.

Now define a vector $\bm{b} \in \setR^{\sqrt{n}}$ with $b_{\ell} := \sum_{i=r}^k \left<\bm{a},\bm{x}_{\ell,i}\right>$. 
Note that if for any $\ell$ we have $|b_\ell|\le 2^{-f(\sqrt{n})}$, then we can set $\bm{x}=\sum_{i=r}^k\bm{x}_{\ell,i}$ and we are done. Therefore we may assume that $|b_\ell|>2^{-f(\sqrt{n})}$ for all $\ell$.
Also note that $\|\bm{b}\|_{\infty} \leq \sqrt{n}$.
We run the oracle again to find a vector $\bm{y} \in \{ -k,\ldots,k\}^{\sqrt{n}}\setminus\{\bm{0}\}$ so that $|\left<\bm{b},\bm{y}\right>| \leq \sqrt{n} \cdot 2^{-f(\sqrt{n})}$. For each block $\ell \in [\sqrt{n}]$ we can use Lemma~\ref{lem:RepresentationWithSmallCoefficients} to find integer coefficients $\lambda_{\ell,i}$ with  $|\lambda_{\ell,i}| \leq \max\{ r-1,k-r\}$ so that
\[
  \Big| y_{\ell} \cdot b_{\ell} - \sum_{i=1}^k \lambda_{\ell,i} \cdot \left<\bm{a},\bm{x}_{\ell,i}\right> \Big| \leq 2^{-f(\sqrt{n})}.
\]
We define
\[
  \bm{x} := \sum_{\ell=1}^{\sqrt{n}} \sum_{i=1}^k \lambda_{\ell,i} \bm{x}_{\ell,i}.
\]
Then $\|\bm{x}\|_{\infty} \leq \max\{ r-1,k-r\}$ since the $\bm{x}_{\ell,i}$'s have disjoint support and $\|\bm{x}_{\ell,i}\|_{\infty} \leq 1$ for all $\ell,i$. 
Moreover, since there is some $y_\ell\ne 0$ and $|b_\ell|>2^{-f(\sqrt{n})}$, we have $\bm{x} \neq \bm{0}$. 

Finally we inspect that
\[
  |\left<\bm{a},\bm{x}\right>| \leq |\left<\bm{y},\bm{b}\right>| + \sum_{\ell=1}^{\sqrt{n}} \Big|y_{\ell}b_{\ell} - \sum_{i=1}^k \lambda_{\ell,i} \left<\bm{a},\bm{x}_{\ell,i}\right>\Big| \le 2\sqrt{n}\cdot 2^{-f(\sqrt{n})} \leq 2^{-f(\sqrt{n})/2}.
\]

The last line comes from the fact that when $f(n)\geq 4\log n$, we have
$2\sqrt{n}\leq 2^{\frac 2 3\log n}= 2^{\frac 4 3\log \sqrt{n}}\leq 2^{\frac 1 3f(\sqrt{n})}$. 
\end{proof}

Now we can apply Lemma \ref{lem:halvecoefficient} recursively to prove Lemma \ref{lem:fullselfreduction}.
\begin{proof}[Proof of Lemma \ref{lem:fullselfreduction}]
For $k\in \setN$,
take $r=\lceil k/2\rceil$,
then Lemma \ref{lem:halvecoefficient} says any $\bm{a} \in [-1,1]^n$,
if we have a $\bm{x} \in \{ -k,\ldots,k\}^n$ such that $|\left<\bm{a},\bm{x}\right>| \leq 2^{-f(n)}$,
we can obtain $\bm{x'} \in \{ -\lfloor k/2\rfloor,\ldots,\lfloor k/2\rfloor\}^n$ such that $|\left<\bm{a},\bm{x}'\right>| \leq 2^{-f(\sqrt{n})/2}$.
Repeat this procedure for $t:=\lceil \log k \rceil$ times,
if $f(n^{2^{-t}})\geq 4\log n$,
we will get $\bm{x}^{''} \in \{ -1,0,1\}^n$ such that $|\left<\bm{a},\bm{x}''\right>| \leq 2^{-f(n^{2^{-t}})/2^t}$.

So we can take $f(n)=n$ as appearing in Lemma \ref{lem:halvecoefficient}.
Note that when $k\leq \frac{\log n}{6\log \log n}$,
$t=\lceil \log k \rceil\leq \log (\frac 1 2\frac{\log n}{\log \log n})$,
so \[
f(n^{2^{-t}})=n^{2^{-t}}\geq n^{2^{-\log (\frac 1 2\frac{\log n}{\log \log n})}}=\log^2 n\geq 4\log n.
\]
Then the condition in Lemma \ref{lem:halvecoefficient} is satisfied, and the bound we get is
\[
2^{-f(n^{2^{-t}})/2^t}=2^{-(n^{2^{-t}})/2^t}\leq 2^{-\frac{n^{\frac 1 {2k}}}{2k}}\leq 2^{-n^{\frac 1{3k}}}.
\]
Here we use the fact that $t=\lceil \log k \rceil\leq \log 2k$,
and when $k\leq \frac 1 {6}\frac{\log n}{\log \log n}$, we have $\frac{n^{\frac 1 {2k}}}{2k}\geq n^{\frac 1{3k}}$. 
\end{proof}

Using Lemma \ref{lem:fullselfreduction}, we are now able to prove Theorem \ref{thm:minktonpp}:

\begin{proof}[Proof of Theorem \ref{thm:minktonpp}]
 If $\rho\geq \frac {\log n} {48\log\log n}$,
 then $2^{-n^{\Theta(1/\rho)}}=2^{-\log^{O(1)}}n$.
 By choosing a proper constant on the exponent,
 this can be achieved with the Karmarkar-Karp algorithm.
 So we only need to work with $\rho< \frac {\log n} {48\log\log n}$.

 If we take $k=3\rho$, 
 then Theorem \ref{thm:NPPoracle} implies $|\left<\bm{a},\bm{x}\right>| \leq n\left(\frac{\rho}{k+1}\right)^{n-1}\leq 2^{-n}$.
 Moreover,
 $k=3\rho\leq\frac{\log n}{16\log\log n}$,
and hence the condition of Lemma \ref{lem:fullselfreduction} is satisfied.
 Then the bound given by Lemma \ref{lem:fullselfreduction} is $2^{-n^{\frac 1 {3k}}}\leq 2^{-n^{\Theta(1/\rho)}}$. 
 \end{proof}

Instead of using an oracle for Minkowski's Theorem one can directly use an oracle for the Shortest Vector Problem in the $\|\cdot\|_{\infty}$ norm. We need the following theorem:
\begin{theorem}\label{thm:SVPoracle}
Suppose that there is a polynomial time algorithm that given a lattice 
$\Lambda \subseteq \setR^n$ with $\det(\Lambda) \leq 1$ finds a non-zero vector $\bm{x} \in \Lambda$ of length $\|\bm{x}\|_{\infty} \leq \rho$.
Then, for any $\bm{a}\in [0,1]^n$, there is a polynomial time algorithm to find $\bm{x} \in \setZ^n \setminus \{\bm{0}\}$ with $\|\bm{x}\|_\infty \leq k$ and
so that $|\left<\bm{a},\bm{x}\right>| \leq 2nk\rho(\frac{\rho}{k})^{n}$. 
\end{theorem}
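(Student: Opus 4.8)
The plan is to follow the strategy of Theorem~\ref{thm:NPPoracle}, but adapted to the fact that an $\|\cdot\|_\infty$-SVP oracle only finds short vectors inside a linear image of a cube, not inside an arbitrary symmetric convex body. The ``cube intersect slab'' body used there has $2n+2$ facets, so it is not a parallelepiped and cannot be handed directly to such an oracle. Instead I would build a \emph{full-rank} lattice in dimension $n+1$ whose short $\ell_\infty$-vectors are forced to have vanishing last coordinate and to encode a good solution of the number balancing instance. The purpose of the extra coordinate is exactly to keep \emph{all} of $x_1,\dots,x_n$ under control: in the $n$-dimensional ``cube intersect slab'' picture the coordinate carrying the linear form is never directly bounded, and that is the obstruction one has to circumvent.

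Concretely, first dispose of the trivial regime: if $2nk\rho(\rho/k)^n\ge 1$, output $\bm e_1$, which satisfies $\|\bm e_1\|_\infty = 1\le k$ and $|\langle\bm a,\bm e_1\rangle| = a_1\le 1$. So assume $2nk\rho(\rho/k)^n<1$, which (since $nk\ge 1$) forces $(k/\rho)^n>2\rho$, in particular $\rho<k$. Set $c:=\rho/k$, $L:=(k/\rho)^n$, $N:=L/(2nk)$, and let $\Lambda$ be the lattice generated by the columns of
\[
  B \;:=\; \begin{pmatrix} c\,I_n & \bm 0 \\ N\bm a^{\top} & L \end{pmatrix}\;\in\;\setR^{(n+1)\times(n+1)},
\]
so that the lattice point indexed by $(\bm x,w)\in\setZ^n\times\setZ$ is $(c\bm x,\;N\langle\bm a,\bm x\rangle+Lw)$. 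Its determinant is $c^n L = 1\le 1$, so the assumed oracle applies in dimension $n+1$ and returns some nonzero lattice point $(c\bm x,\;N\langle\bm a,\bm x\rangle+Lw)$ of $\ell_\infty$-norm at most $\rho$.

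Now read off the conclusion. From $\|c\bm x\|_\infty\le\rho$ and $c=\rho/k$ we get $\|\bm x\|_\infty\le k$ at once, hence $|\langle\bm a,\bm x\rangle|\le\|\bm a\|_1\|\bm x\|_\infty\le nk$ using $a_i\in[0,1]$. Therefore $|Lw|\le\rho+|N\langle\bm a,\bm x\rangle|\le\rho+Nnk=\rho+L/2<L$, the last step because $L=(k/\rho)^n>2\rho$ in the non-trivial case; since $w\in\setZ$ this forces $w=0$. With $w=0$ the returned point is $(c\bm x,\;N\langle\bm a,\bm x\rangle)$, which is nonzero, so $\bm x\in\setZ^n\setminus\{\bm 0\}$, and $|N\langle\bm a,\bm x\rangle|\le\rho$ gives $|\langle\bm a,\bm x\rangle|\le\rho/N = 2nk\rho(\rho/k)^n$, exactly as required.

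The only real content is the construction, and I expect the main obstacle to be finding it rather than checking it. One has to realize that passing to $\setR^{n+1}$ is unavoidable — the naive homogeneous lattice $\{(c\bm x,\;N\langle\bm a,\bm x\rangle)\}$ has rank $n$ in $\setR^{n+1}$, and since $\ell_\infty$ is not rotation invariant it cannot simply be flattened into $\setR^n$ — and one has to tune the auxiliary diagonal entry $L$ so that it is large enough to pin $w=0$ yet small enough to keep $\det\Lambda\le 1$; this trade-off is precisely what costs the extra $\mathrm{poly}(n,k,\rho)$ factor relative to Theorem~\ref{thm:NPPoracle}. Once $c,L,N$ are chosen as above, the rest is the short computation of the preceding paragraph; I would only add the routine remark that $c,L,N$ may be taken rational so that $B$ has polynomial bit size, which changes nothing.
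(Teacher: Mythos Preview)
Your proof is correct and essentially identical to the paper's: you build the same $(n+1)\times(n+1)$ lower-triangular basis (your $c,N,L$ are exactly the paper's $\rho/k,\;\tfrac{1}{2nk}(k/\rho)^n,\;(k/\rho)^n$), apply the oracle, read off $\|\bm x\|_\infty\le k$ from the first $n$ coordinates, force the auxiliary integer coordinate to vanish via the same inequality $|w|\le \tfrac12+\rho(\rho/k)^n<1$, and conclude. The only cosmetic difference is the threshold for the trivial case (you split at $2nk\rho(\rho/k)^n\ge 1$, the paper at $\rho(\rho/k)^n\ge\tfrac12$), but both yield the needed bound $(k/\rho)^n>2\rho$ in the non-trivial regime.
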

\begin{proof}
When $\rho(\frac{\rho}{k})^{n}\geq \frac 1 2$, we have 
$2nk\rho(\frac{\rho}{k})^{n}\geq 1$, in which case $\bm{x}=(1,0,\cdots,0)$ trivially does the job. We can hence assume that $\rho(\frac{\rho}{k})^{n}< \frac 1 2$.

Let $\bm I_n$ be the $n\times n$ identity matrix and consider the lattice $\Lambda$ generated by the $(n+1) \times (n+1)$-dimensional matrix
$$B:=
\begin{pmatrix} \frac {\rho}{k}\bm{I_n} & \bm{0} \\ \frac 1 {2nk}(\frac{k}{\rho})^{n}\bm{a}^T & (\frac k {\rho})^{n}\end{pmatrix}.$$

Note that $\det(B)=1$.
Let  $\bm{x} \in \Lambda$ be the vector returned by the algorithm.
Then $\bm{y}:=B^{-1}\bm{x}\in \setZ^{n+1}$.
Since $\|\bm{x}\|_{\infty} \leq \rho$ and $\bm{x}=B\bm{y}$,
we have $|y_i|\leq k$ for $i=1,\ldots,n$.
Moreover,
\[
\begin{split}
 \rho\geq |x_{n+1}|&=\left|y_{n+1}\left(\frac k {\rho}\right)^{n}+\sum_{i=1}^n\frac 1 {2nk}\left(\frac{k}{\rho}\right)^{n}y_ia_i\right|  \geq |y_{n+1}|\left(\frac k {\rho}\right)^{n}-\left|\sum_{i=1}^n\frac 1 {2nk}\left(\frac{k}{\rho}\right)^{n}y_ia_i\right|\\
 &\overset{|y_i| \leq k}{\geq} |y_{n+1}|\left(\frac k {\rho}\right)^{n}-\frac 1 {2}\left(\frac{k}{\rho}\right)^{n},
\end{split}
\]
which implies $|y_{n+1}|\leq \frac 12+\rho(\frac{\rho}{k})^{n}$. Since $\rho(\frac{\rho}{k})^{n}<\frac 1 2$ and $y_{n+1} \in \mathbb Z$, we must have $y_{n+1}=0$.

Therefore,
$\left|\sum_{i=1}^n\frac 1 {2nk}\left(\frac{k}{\rho}\right)^{n}y_ia_i\right| = |x_{n+1}| \leq \rho$, and hence $\left|\sum_{i=1}^ny_ia_i\right| \leq 2nk\rho\left(\frac{\rho}{k}\right)^n$, so the vector $(y_1,\ldots,y_n)$ does the job.

\end{proof}

Now using Lemma \ref{lem:fullselfreduction} and Theorem \ref{thm:SVPoracle} we are able to prove Theorem \ref{thm:ReducingNBPToSVP}:

\begin{proof}[Proof of Theorem \ref{thm:ReducingNBPToSVP}] The proof mirrors that of Theorem \ref{thm:minktonpp}.  We start by choosing $k = 3\rho$ in Theorem \ref{thm:SVPoracle}, which gives us $|\left<\bm{a},\bm{x}\right>| \leq 2^{-n}$ and $\|\bm{x}\|_\infty \leq 3\rho$. As we argued in the proof of Theorem \ref{thm:minktonpp}, we may assume that $\rho< \frac {\log n} {48\log\log n}$. This gives us $k \leq \frac{\log n}{16 \log \log n}$ and we can, as before, apply Lemma \ref{lem:fullselfreduction}. 
\end{proof}

\section{Reducing Minkowski's Theorem to Number Balancing}

In this section we show how an oracle for number balancing can be used to design an algorithm to
approximate Minkowski's problem. More precisely, given a large enough symmetric convex body $K$, 
we will be able to find a vector $\bm{x}\in \setZ^n\cap \rho K\setminus\{\bm{0}\}$ where $\rho$ is a polynomial in $n$.
The first helpful insight is that each such convex body can be approximated within a factor 
of $\sqrt{n}$ using an \emph{ellipsoid}~\cite{lovasz1990geometric,grotschel2012geometric}. Recall that an ellipsoid is a set of the form
\begin{equation} \label{eq:Ellipsoid}
  \pazocal{E} = \Big\{ \bm{x} \in \setR^n \mid \sum_{i=1}^n \frac{1}{\lambda_i^2}\cdot \left<\bm{x},\bm{a}_i\right>^2 \leq 1 \Big\}
\end{equation}
with an \emph{orthonormal basis} $\bm{a}_1,\ldots,\bm{a}_n \in \setR^n$ defining the \emph{axes} and positive coefficients $\lambda_1,\ldots,\lambda_n > 0$ that describe the \emph{lengths of the axes}\footnote{Strictly speaking, the length of axis $i$ is $2\lambda_i$, but we will continue calling $\lambda_i$ the ``axis length''.}.
Overall, our reduction will operate in two steps: 
\begin{enumerate}
\item[(i)] By combining \emph{John's Theorem} with \emph{lattice basis reduction}, we can show that it suffices to find integer points in an ellipsoid that is \emph{well-rounded}, meaning that the lengths of the axes are bounded.
\item[(ii)] We show that a number balancing oracle allows a self-reduction to a generalized form where inner products
with $n$ vectors have to be minimized and additionally the solution space is $\setZ^n$ instead of $\{ -1,0,1\}^n$.  
\end{enumerate}
We begin by proving $(ii)$ and postpone $(i)$ until the end of this section.

\subsection{A self-reduction to a generalized form of number balancing}



The main technical result of this section is the following reduction:
\begin{theorem} \label{thm:GeneralNBPtoNBP} 
Suppose there is a $\delta$-approximation for number balancing with error 
parameter $\delta=\frac{\rho(n)}{2^n}$ and $\rho(n) \leq 2^{n/2}$.
Then there is a polynomial time algorithm that on input $\bm{a}_1,\ldots,\bm{a}_n\in [0,1]^n$ and $0<\lambda_1\le \ldots \le \lambda_n\le 2^{n}$ with $\prod_{i=1}^n\lambda_i\ge 1$, finds a vector $\bm{x}\in \setZ^n\setminus\{\bm{0}\}$ with
$$|\langle \bm{x},\bm{a}_i\rangle | \le O(n^4) \cdot \lambda_i\cdot \rho(4n^2)^{1/n} \quad \forall i=1,\ldots,n.$$
\end{theorem}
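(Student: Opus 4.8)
The plan is to bootstrap a plain one-dimensional number balancing oracle into a multi-dimensional one by the same ``block + recursion'' strategy used in Section~2, but now applied across the $n$ linear functionals $\langle \cdot,\bm{a}_i\rangle$ simultaneously, while keeping track of how the axis lengths $\lambda_i$ rescale the allowed error. First I would handle the scaling: since the $\lambda_i$ are sorted and $\prod_i \lambda_i \ge 1$, I can normalize so that the ``hard'' direction is the one with smallest $\lambda_i$, and rescale each coordinate of $\bm{x}$-space (equivalently, each $\bm{a}_i$) by $1/\lambda_i$; this turns the ellipsoid-type requirement $|\langle\bm{x},\bm{a}_i\rangle|\le c\lambda_i$ into a uniform bound $|\langle\bm{x},\bm{\tilde a}_i\rangle|\le c$, at the cost of the $\bm{\tilde a}_i$ having entries of size up to $2^n$ rather than in $[0,1]$. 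So a preliminary normalization lemma (bounding entries, truncating to polynomially many bits, absorbing the rounding error) is needed, and the $O(n^4)$ loss in the final bound presumably comes partly from this truncation and partly from the block structure below.

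The core step is a self-reduction analogous to Lemma~\ref{lem:halvecoefficient}: partition $[n]$ into roughly $\sqrt n$ blocks of size $\sqrt n$, and within each block apply the $\delta$-approximation oracle to get a small $\{-1,0,1\}$-combination of the coordinates in that block — but ``small'' must now mean small against \emph{all} $n$ functionals at once. The trick is to feed the oracle not the raw numbers but a single scalarization: one can bundle the $n$ functionals restricted to a block into one new number per coordinate (e.g.\ by taking a suitable integer combination $\sum_i N^i\langle \bm{e}_j,\bm{a}_i\rangle$ with a large base $N$, so that smallness of the scalar forces smallness of each digit, i.e.\ each $|\langle\bm{x},\bm{a}_i\rangle|$), run the $\delta$-oracle, then recurse on the $\sqrt n$ block-representatives. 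Because the oracle only controls one scalar, the base $N$ must exceed the number of blocks times the per-functional precision, which costs another factor of roughly $n$ per recursion level; the quantity $\rho(4n^2)^{1/n}$ in the statement is exactly what one gets by applying the oracle at dimension $\Theta(n^2)$ (the blowup from the scalarization bundling $n$ functionals) and then taking an $n$-th root through $\log$-many recursion levels, just as $2^{-n^{1/3k}}$ arose in Lemma~\ref{lem:fullselfreduction}.

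The key steps in order: (1) normalize/rescale the instance so all $\lambda_i$ become $1$ and truncate entries to $\mathrm{poly}(n)$ bits, tracking the additive error; (2) state and prove a ``one round'' lemma: given an oracle returning $\bm{x}'\in\{-k,\dots,k\}^m\setminus\{\bm 0\}$ controlling all $n$ functionals to within $\varepsilon$, produce one controlling them to within $\approx\sqrt n\,\varepsilon$ on an instance of dimension $\sqrt m$ with coefficients in $\{-\lceil k/2\rceil,\dots\}$ — using scalarization to invoke the genuine $\delta$-oracle and Lemma~\ref{lem:RepresentationWithSmallCoefficients} to halve coefficient size; (3) iterate $\Theta(\log k)$ times, with $k=\Theta(\text{something like }n)$ chosen so the recursion bottoms out at a $\{-1,0,1\}$ vector; (4) undo the normalization, collecting all the polynomial losses into $O(n^4)$ and the oracle's dependence into $\rho(4n^2)^{1/n}$. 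The main obstacle I anticipate is step (2): making a \emph{scalar} oracle simultaneously control $n$ different inner products requires the base-$N$ packing to be chosen large enough that no ``carrying'' between digits destroys the per-functional guarantee, yet small enough that the dimension blowup stays $O(n^2)$ and the oracle's hypothesis $\rho(n)\le 2^{n/2}$ is still usable at the inflated dimension — balancing these is exactly where the constants $4n^2$ and $n^4$ are forced, and verifying that the bundled instance still satisfies the oracle's input conditions (numbers in $[0,1]$, the right normalization of $\delta$) is the delicate bookkeeping.
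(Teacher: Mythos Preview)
Your proposal contains a genuine gap. You are trying to transplant the block-and-halve recursion of Section~2 (Lemmas~\ref{lem:RepresentationWithSmallCoefficients}--\ref{lem:fullselfreduction}) to this setting, but that recursion is designed to \emph{shrink} the coefficient range of an oracle's output from $\{-k,\dots,k\}$ down to $\{-1,0,1\}$. Here the number-balancing oracle already outputs $\{-1,0,1\}$-vectors, so there is nothing to shrink; your step~(3), ``iterate $\Theta(\log k)$ times so the recursion bottoms out at a $\{-1,0,1\}$ vector,'' has no content. More damagingly, the scalarization in step~(2) cannot work as stated: bundling $n$ functionals into one number via a base-$N$ encoding costs a multiplicative factor of at least $N^{\,n}\ge 2^n$ in precision, while calling the oracle on a block of size $\sqrt n$ yields error only $\rho(\sqrt n)/2^{\sqrt n}$. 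After un-normalizing, the per-functional error is at least of order $2^{\,n-\sqrt n}$, which gives no control whatsoever --- the ``no carrying'' condition you flag as delicate is in fact unsatisfiable with these parameters.

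The paper's proof takes a completely different, non-recursive route and hinges on the one idea your outline lacks: \emph{increasing} the dimension at which the oracle is called so as to manufacture large integer coefficients. Concretely, Lemma~\ref{lem:kvectors} packs the $n$ constraints into a single oracle call by discretizing each $\bm{a}_i$ so finely that the rounded inner products are forced to be \emph{exactly} zero (this is the clean version of your base-$N$ idea, and it requires the side condition $\delta_i\le\tfrac12$). Then Lemma~\ref{lem:addingQ} replaces each of the $n$ coordinates by $\log Q$ binary ``bit'' coordinates, applies Lemma~\ref{lem:kvectors} once at dimension $n\log Q$, and collapses the resulting $\{-1,0,1\}^{n\log Q}$ vector back to a vector in $\{-Q,\dots,Q\}^n$. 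Choosing $Q=2^{4n}$ makes the oracle dimension $4n^2$ (whence the $\rho(4n^2)^{1/n}$) and, crucially, drives $\delta_i=\lambda_i\cdot f(4n^2)^{1/n}$ below $\tfrac12$ even when $\lambda_i$ is as large as $2^n$; the factor $2(n\log Q)^2=32n^4$ from Lemma~\ref{lem:addingQ} is the $O(n^4)$ in the theorem. There is no block decomposition and no recursion.
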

In particular if $\rho(n) \leq 2^{\sqrt{n}}$, then the right hand side in Theorem~\ref{thm:GeneralNBPtoNBP}
simplifies to just $O(n^4) \cdot \lambda_i$.
We will show this by introducing two extensions of the number balancing oracle. The first extension 
gives a weaker bound in terms of the error parameter, but allows for multiple vectors in $[0,1]^n$.
In the second extension, we extend the range of coefficients from $\{-1,0,1\}$ to  $\{-Q,\ldots,Q\}$ 
which leads to a much stronger error bound. 

\begin{lemma}\label{lem:kvectors} 
Suppose there is a $\delta$-approximation for number balancing. 
Then there is a polynomial time algorithm that given an input $\bm{a}_1,\ldots,\bm{a}_k \in [0,1]^n$
and parameters  $\delta_1,\ldots,\delta_k\le \frac{1}{2}$ with $\prod_{i=1}^k\delta_i\ge \delta$ 
finds a vector $\bm{x}\in \{-1,0,1\}^n\setminus\{\bm{0}\}$ with $|\langle \bm{a}_i,\bm{x}\rangle|\le 2n^2\delta_i$ for all $i=1,\ldots,k$.
\end{lemma}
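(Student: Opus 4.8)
\textbf{Proof plan for Lemma~\ref{lem:kvectors}.}
The plan is to pack the $k$ vectors $\bm{a}_1,\ldots,\bm{a}_k$ into a single vector to which the given $\delta$-approximation for number balancing can be applied, using a scaling trick so that the error in the $i$-th coordinate block is controlled by $\delta_i$. Concretely, I would first discretize: replace each entry $(a_i)_j$ by a rational $\tilde a_{i,j}$ with denominator a common power $M$ (so $\tilde a_{i,j} = c_{i,j}/M$ with $c_{i,j} \in \{0,1,\ldots,M\}$), losing only a negligible additive error that can be absorbed into the final bound. Then I would build a new single instance $\bm{b} \in [0,1]^m$ of ordinary number balancing by concatenating, for each $i$, a scaled copy of (the discretized) $\bm{a}_i$ times a weight $w_i$, where the $w_i$ are chosen so that the ``carries'' between the blocks do not interfere — this is the standard base-$B$ encoding idea. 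The number-balancing oracle then returns $\bm{x} \in \{-1,0,1\}^m \setminus \{\bm 0\}$ with $|\langle \bm b, \bm x\rangle|$ tiny, and by reading off the blocks one extracts the single sign vector $\bm x \in \{-1,0,1\}^n$ (the key point being that all $k$ blocks must use the \emph{same} $\bm x$, which forces us to be cleverer than a plain concatenation).

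Because the oracle only gives one sign vector but we need it to simultaneously work for all $k$ bodies, the right construction is: let $\bm b$ have coordinates indexed by $j \in [n]$, with $b_j := \sum_{i=1}^k W_i \cdot \tilde a_{i,j}$ for suitable integer (or rational) weights $W_1 \gg W_2 \gg \cdots \gg W_k$ chosen in a ``mixed-radix'' fashion so that $\langle \bm b, \bm x\rangle = \sum_i W_i \langle \tilde{\bm a}_i, \bm x\rangle$ and the individual terms $W_i\langle \tilde{\bm a}_i,\bm x\rangle$ can be disentangled from the total. Here each $\langle \tilde{\bm a}_i, \bm x\rangle$ is an integer multiple of $1/M$, and its absolute value is at most $n$, so $M \langle \tilde{\bm a}_i, \bm x\rangle$ is an integer in $[-nM, nM]$; choosing the gaps between consecutive $W_i$ to exceed $2nM$ (times a safety factor) makes the map $\langle \bm b,\bm x\rangle \mapsto (\langle \tilde{\bm a}_1,\bm x\rangle,\ldots,\langle \tilde{\bm a}_k,\bm x\rangle)$ injective on the relevant range. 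To get the \emph{per-body} guarantee $|\langle \bm a_i,\bm x\rangle| \le 2n^2\delta_i$ rather than a single global bound, I would additionally scale within each block: use $W_i = (\text{radix factor}) / \delta_i$-type weights so that a global error of $\delta = \prod_i \delta_i$ (what the oracle promises on $\bm b$, after normalizing $\bm b$ to lie in $[0,1]^n$, which costs a factor of $\|\bm b\|_\infty \le k \le $ poly) redistributes to an error of roughly $\delta_i$ in body $i$; the factor $2n^2$ then comes from (a) the normalization of $\bm b$ into $[0,1]^n$, (b) the discretization error, and (c) the $\|\bm a_i\|_\infty \le 1$, $\|\bm x\|_1 \le n$ slack. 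I would pick $M$ polynomially large in $n$ and in $1/\min_i\delta_i$, noting the hypothesis $\delta_i \le \tfrac12$ keeps $\log(1/\delta_i)$ from blowing up the bit-length — one should check the bit-length of $\bm b$ stays polynomial, which it does since $\prod \delta_i \ge \delta$ and $\delta$ is (at worst) $2^{-\mathrm{poly}(n)}$.

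The main obstacle I anticipate is precisely this \emph{bookkeeping of the radix weights}: one must simultaneously (i) keep all $W_i \tilde a_{i,j}$ summing to something that normalizes into $[0,1]$ with only polynomially many bits, (ii) make the separation between scales large enough that the blocks are recoverable, and (iii) arrange that the single global error bound $\delta \ge \prod_i \delta_i$ splits into the claimed $\delta_i$ per block rather than, say, $\delta^{1/k}$ per block — this last point is why the product condition $\prod_i \delta_i \ge \delta$ (and not $\min_i \delta_i \ge \delta$) is exactly the right hypothesis. A clean way to handle (iii) is to set up the weights so that $\langle \bm b, \bm x\rangle$, written in the mixed-radix system, has the quantity $M\langle \tilde{\bm a}_i, \bm x\rangle$ sitting in a ``digit window'' of width $\approx 2nM/\delta_i$ scaled by $\prod_{j<i}(2nM/\delta_j)$; then $|\langle \bm b,\bm x\rangle| \le \delta$ forces the top window (body $1$) to have $|M\langle\tilde{\bm a}_1,\bm x\rangle| \lesssim nM/\delta_1 \cdot \delta / (\prod_{j\ge 2}\cdots)$, and unwinding gives $|\langle \tilde{\bm a}_i,\bm x\rangle| \lesssim \delta_i \cdot \mathrm{poly}(n)$ for every $i$ by induction on the windows. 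Everything else — discretization, normalization, the nonzero-ness of $\bm x$ being inherited from the oracle — is routine, so I would state those parts briefly and spend the bulk of the write-up on exhibiting the weights and verifying the window-separation inequality.
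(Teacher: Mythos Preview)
Your overall plan is the paper's: form a single vector $b_j=\sum_i W_i\,\tilde a_{i,j}$ with geometrically decreasing weights, call the oracle once on $\bm b$, and peel off the inner products one body at a time by a positional ``no-carry'' argument. You also correctly isolate the reason the hypothesis is $\prod_i\delta_i\ge\delta$ rather than $\min_i\delta_i\ge\delta$: that product is exactly the encoding budget.

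The gap is your insistence on a \emph{common} discretization denominator $M$. With every $\tilde{\bm a}_i$ rounded to multiples of $1/M$, the integer $M\langle\tilde{\bm a}_i,\bm x\rangle$ ranges over $\{-nM,\ldots,nM\}$ for every $i$, so each body consumes $\Theta(nM)$ states in the encoding; forcing all $k$ of them to zero via the oracle then needs roughly $(nM)^k\lesssim 1/\delta$. But to make the rounding error $n/M$ at most $2n^2\delta_i$ for \emph{every} $i$ you also need $M\gtrsim 1/(n\min_i\delta_i)$. Together these give $(\min_i\delta_i)^{-k}\lesssim 1/\delta\le 1/\prod_i\delta_i$, i.e.\ $\prod_i\delta_i\lesssim(\min_i\delta_i)^k$, which fails whenever the $\delta_i$ are unequal (already $k=2$, $\delta_1=\tfrac12$, $\delta_2$ small). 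Widening the windows to $2nM/\delta_i$ as you suggest does not help: the digit still only occupies $2nM$ of that width, so you are \emph{wasting} a factor $1/\delta_i$ per body rather than exploiting it. And the alternative of merely bounding $|\langle\tilde{\bm a}_j,\bm x\rangle|\le C\delta_j$ instead of forcing it to $0$ does not cascade: at step $i$ the already-peeled bodies $j<i$ contribute $\sum_{j<i}W_j\cdot C\delta_j$, which is dominated by the $j=1$ term and swamps the target $W_i\delta_i$.

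The paper fixes this by discretizing each $\bm a_i$ at its \emph{own} granularity $2n\delta_i$ (not a common $M$) and taking $W_i=\prod_{j<i}\delta_j$. Now $\langle\tilde{\bm a}_i,\bm x\rangle$ is a multiple of $2n\delta_i$ with only $O(1/\delta_i)$ possible values, so the encoding uses $\prod_i O(1/\delta_i)\le O(1)^k/\delta$ states --- precisely the oracle's resolution. The induction shows $|\langle\tilde{\bm a}_i,\bm x\rangle|<2n\delta_i$, hence $=0$, and the claimed $2n^2\delta_i$ is just the rounding error $n\cdot 2n\delta_i$. Once you replace the common $M$ by this per-body granularity, your sketch becomes the paper's proof.
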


\begin{proof}
The idea is that we will discretize all of the vectors and then run our oracle on their sum.
The vector that we obtain will then have small inner product with all of the $\bm{a}_i$.
To define the discretization $\tilde{\bm{a}}_i$, round elements of $\bm{a}_i$ down to the nearest multiple of $2n\delta_i$, and then multiply by $\prod_{j<i}\delta_j$.
Defining $\tilde{\bm{a}}_i$ this way, notice that for all $i$ we have $|\langle \tilde{\bm{a}}_i,\bm{x}\rangle|\le n\prod_{j<i}\delta_j$.

Now let $\bm{c}=\tilde{\bm{a}}_1+\ldots+\tilde{\bm{a}}_k$. 
By our oracle, we can find $\bm{x}\in \{-1,0,1\}^n\setminus\{\bm{0}\}$ with $|\langle \bm{c},\bm{x}\rangle|\le \delta$.
We have
\begin{eqnarray*}
|\langle \tilde{\bm{a}}_1,\bm{x}\rangle|
&\le&
|\langle \tilde{\bm{a}}_2,\bm{x}\rangle|+\ldots+|\langle\tilde{\bm{a}}_k,\bm{x}\rangle|+|\langle \bm{c},\bm{x}\rangle|
\le 
n\delta_1+n\prod_{j\le 2}\delta_j+\ldots+n\prod_{j\le k}\delta_j\\
&\le & 
n\delta_1\cdot \left(1+\frac{1}{2}+\ldots+\frac{1}{2^k}\right) <  2n\delta_1.
\end{eqnarray*}
Therefore $|\langle \tilde{\bm{a}}_1,\bm{x}\rangle|=0$. 
Similarly, for $1<i\le k$, if $|\langle\tilde{\bm{a}}_1,\bm{x}\rangle|,\ldots,|\langle\tilde{\bm{a}}_{i-1},\bm{x}\rangle|=0$, then we have
$$|\langle \tilde{\bm{a}}_i,\bm{x}\rangle|\le 
|\langle \tilde{\bm{a}}_{i+1},\bm{x}\rangle|+\ldots+|\langle\tilde{\bm{a}}_k,\bm{x}\rangle| + |\langle \bm{c},\bm{x}\rangle|
< 2n\prod_{j\le i}\delta_j,$$
and hence $|\langle \tilde{\bm{a}}_i,\bm{x}\rangle|=0$ for all $i$.
Notice that by definition of $\tilde{\bm{a}}_i$ we have $\|\prod_{j<i}\delta_j \bm{a}_i-\tilde{\bm{a}}_i\|_\infty\le 2n\prod_{j\le i}\delta_j$.
Therefore $|\langle\prod_{j<i}\delta_j \bm{a}_i,\bm{x}\rangle|\le 2n^2\prod_{j\le i}\delta_j$, and so we can conclude that
$|\langle \bm{a}_i,\bm{x}\rangle|\le 2n^2\delta_i$. 
\end{proof}

Now we come to a second reduction that takes the oracle constructed in Lemma~\ref{lem:kvectors}
as a starting point: 
\begin{lemma}\label{lem:addingQ} 
Assume there exists a $f(n)$-approximation for number balancing.
Let $\bm{a}_1,\ldots,\bm{a}_k\in [0,1]^n$ be given with parameters $\delta_1,\ldots,\delta_k\le \frac{1}{2}$ and a number $Q$
that is a power of $2$ and satisfies $\prod_{i=1}^k\delta_i\ge f(n\log{Q})$. Then in polynomial time we can find
a vector $\bm{x}\in \{-Q,\ldots,Q\}^n\setminus\{\bm{0}\}$ with 
$|\langle \bm{a}_i,\bm{x}\rangle|\le \delta_i  Q\cdot 2(n\log{Q})^2$ for all $i=1,\ldots,k$.
\end{lemma}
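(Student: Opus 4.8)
The plan is to reduce the problem with coefficient range $\{-Q,\ldots,Q\}$ to a problem with coefficient range $\{-1,0,1\}$ but in a larger ambient dimension, and then apply Lemma~\ref{lem:kvectors}. Since $Q$ is a power of $2$, write $Q = 2^{q-1}$ so that every integer in $\{-Q,\ldots,Q\}$ can be written in ``signed binary'': each coordinate $x_j \in \{-Q,\ldots,Q\}$ is $\sum_{t=0}^{q-1} 2^t \sigma_{j,t}$ with $\sigma_{j,t} \in \{-1,0,1\}$ (this is slightly redundant, but redundancy is harmless here). This suggests working in dimension $N := n q = n\log(2Q) = \Theta(n\log Q)$; I would absorb constants and simply treat $N = \Theta(n\log Q)$, matching the $n\log Q$ appearing in the hypothesis $\prod_i \delta_i \ge f(n\log Q)$.

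The key steps, in order, are: (1) For each input vector $\bm{a}_i \in [0,1]^n$, build a lifted vector $\bm{a}_i' \in [0,1]^N$ whose entries are the scaled coefficients $2^t a_{i,j}/Q$ indexed by pairs $(j,t)$; the scaling by $1/Q$ keeps entries in $[0,1]$, and $\langle \bm{a}_i', \bm{\sigma}\rangle = \frac{1}{Q}\langle \bm{a}_i, \bm{x}\rangle$ whenever $\bm{\sigma}$ encodes $\bm{x}$. (2) Apply Lemma~\ref{lem:kvectors} in dimension $N$ with the same parameters $\delta_1,\ldots,\delta_k$; the hypothesis $\prod_i \delta_i \ge f(N) = f(n\log Q)$ is exactly what is needed to invoke it (here $f$ plays the role of $\delta$ in that lemma). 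This returns $\bm{\sigma} \in \{-1,0,1\}^N \setminus \{\bm{0}\}$ with $|\langle \bm{a}_i', \bm{\sigma}\rangle| \le 2N^2 \delta_i$ for all $i$. (3) Fold $\bm{\sigma}$ back down: set $x_j := \sum_{t=0}^{q-1} 2^t \sigma_{j,t}$, giving $\bm{x} \in \{-Q,\ldots,Q\}^n$. Then $|\langle \bm{a}_i, \bm{x}\rangle| = Q \cdot |\langle \bm{a}_i', \bm{\sigma}\rangle| \le Q \cdot 2N^2 \delta_i = 2\delta_i Q (n\log Q)^2$, which is the claimed bound (after checking that my choice of $q$ makes $N \le n\log Q$, or adjusting the encoding so that it does — e.g. using $q = \log Q$ levels $2^0,\ldots,2^{q-1}$ to reach values up to $2^q - 1 \ge Q$, and noting $Q \ge 2$ so $\log(2Q) \le 2\log Q$ is too lossy; more carefully one takes $N = n\lceil\log Q\rceil$ and absorbs the ceiling, or reindexes). (4) The remaining worry is that $\bm{x}$ could be the zero vector even though $\bm{\sigma} \neq \bm{0}$, because of cancellation within a column (e.g. $\sigma_{j,0} = 1, \sigma_{j,1} = -1$ is not how binary works, but nothing forbids the oracle from producing it). I would handle this by using a \emph{non-redundant} encoding: restrict the lift so that within each column $j$ at most one ``sign pattern'' is active, or — cleaner — note that the standard binary expansion of each value in $\{0,\ldots,Q\}$ together with an overall $\pm$ sign gives an injective map, and build the reduction so that the oracle is forced to respect it. Alternatively, observe that if $\bm{x} = \bm{0}$ then $\langle \bm{a}_i,\bm{x}\rangle = 0$ trivially for all $i$, which is \emph{good}, not bad — the only real requirement is $\bm{x} \neq \bm{0}$, so one must argue the encoding cannot collapse a nonzero $\bm{\sigma}$ to zero, and the fix is to use a \emph{prefix-free} or \emph{strictly binary} gadget.

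The main obstacle is precisely this last point: ensuring the lifted solution does not fold back to $\bm{0}$, while still keeping all lifted vectors in $[0,1]^N$ and the dimension at $\Theta(n\log Q)$. The cleanest route is to add, for each original coordinate $j$, one extra ``guard'' coordinate to $\bm{\sigma}$ and a corresponding tiny entry to each $\bm{a}_i'$ that makes any all-column-cancelling pattern infeasible for the $\delta$-oracle; but more likely the intended argument simply uses the standard binary representation with a single sign bit per coordinate and argues injectivity directly, accepting a harmless constant-factor blowup in $N$. Everything else — the discretization, the telescoping bound from Lemma~\ref{lem:kvectors}, the rescaling — is routine bookkeeping of the same flavor as the proof of Lemma~\ref{lem:kvectors}.
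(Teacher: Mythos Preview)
Your approach is exactly the paper's: lift each $\bm{a}_i$ to dimension $n\log Q$ using binary weights, apply Lemma~\ref{lem:kvectors} there, and fold the $\{-1,0,1\}$-vector back using the same weights. The only substantive difference is that you spend most of step~(4) worrying about a non-issue. The fold-back map $\sigma \mapsto x_j = \sum_t 2^t \sigma_{j,t}$ with $\sigma_{j,t}\in\{-1,0,1\}$ is automatically injective on each coordinate block: if $T$ is the largest index with $\sigma_{j,T}\neq 0$, then $|2^T\sigma_{j,T}| = 2^T > 2^T-1 \ge \bigl|\sum_{t<T} 2^t\sigma_{j,t}\bigr|$, so $x_j\neq 0$. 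Hence $\bm{\sigma}\neq\bm{0}$ forces $\bm{x}\neq\bm{0}$ with no guard coordinates, sign bits, or prefix-free gadgets needed; the paper simply asserts $\bm{x}\in\{-Q,\ldots,Q\}^n\setminus\{\bm{0}\}$ without comment. Your indexing hesitation is also unnecessary: taking levels $\ell=1,\ldots,\log Q$ with lifted entries $a_{i,j}2^{-\ell}\in[0,1]$ and setting $x_j = Q\sum_\ell 2^{-\ell}y_{j\ell}$ gives $|x_j|\le Q-1$ and dimension exactly $n\log Q$, matching the hypothesis on the nose.
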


\begin{proof} 
For each $i$, we define $\bm{b}_i\in [0,1]^{n\log{Q}}$ by $\bm{b}_i(j,\ell)=\bm{a}_i(j)2^{-\ell}$ for $j=1,\ldots,n$ and $\ell=1,\ldots,\log{Q}$.
Since $\prod_{i=1}^k\delta_i\ge f(n\log{Q})$,
we can apply Lemma \ref{lem:kvectors} to find $\bm{y}\in \{-1,0,1\}^{n\log{Q}}\setminus\{\bm{0}\}$ with
$|\langle \bm{b}_i,\bm{y}\rangle |\le \delta_i\cdot 2(n\log{Q})^2$.

Now define $\bm{x}\in \{-Q,\ldots,Q\}^n\setminus\{\bm{0}\}$ by $x_j:=Q\sum_{\ell=1}^{\log{Q}}2^{-\ell}y_{j\ell}$.
Then for $i=1,\ldots,k$ we have
\[
 \delta_i \cdot (2n \log Q)^2 \geq \left|\left<\bm{b}_i,\bm{y}\right>\right| = \Big|\sum_{j=1}^n \underbrace{\sum_{\ell=1}^{\log Q} y_{j\ell} 2^{-\ell}}_{=x_j/Q} a_i(j)\Big| = \frac{1}{Q} \cdot \left|\left<\bm{a}_i,\bm{x}\right>\right|
\]
and rearranging gives the claim. 
\end{proof}

Finally we come to the proof of Theorem~\ref{thm:GeneralNBPtoNBP}.  

\begin{proof}[Theorem~\ref{thm:GeneralNBPtoNBP}]
Suppose that the oracle has parameter $f(n) = \rho(n) / 2^n$.
Suppose that $\bm{a}_1,\ldots,\bm{a}_n\in [0,1]^n$ and $\lambda_1,\ldots,\lambda_n>0$ with $\prod_{i=1}^n\lambda_i\ge 1$.
We choose $Q := 2^{4n}$, which is a power of 2. Define $\delta_i=\lambda_i\cdot f(n\log{Q})^{1/n}$. Note that
$\delta_i \leq 2^{3n/2} \cdot f(4n^2)^{1/n} \le \frac{1}{2}$ since $f(n) \leq 2^{-n/2}$.
Then $\prod_{i=1}^n\delta_i\ge f(n\log{Q})$, and so 
by Lemma \ref{lem:addingQ} we can find $\bm{y}\in \{-Q,\ldots,Q\}^n\setminus\{\bm{0}\}$ with 
\begin{eqnarray*}
|\langle \bm{a}_i,\bm{y}\rangle|\le Q\delta_i\cdot 2(n\log{Q})^2&\le& Q\lambda_i\cdot f(n\log{Q})^{1/n}  \cdot 2(n\log{Q})^2 \\
&=& \lambda_i \cdot \rho(4n^2)^{1/n} \cdot 2\cdot (4n^2)^2. 
\end{eqnarray*}
\end{proof}


\subsection{A reduction to well-rounded ellipsoids}
Using John's Theorem \cite{JohnsTheorem1948}, the convex body $K$ in Theorem~\ref{thm:npptomink} can be approximated by 
an ellipsoid $\pazocal{E}$ as defined in Eq.~\eqref{eq:Ellipsoid}. The natural approach will then be to apply
Theorem~\ref{thm:GeneralNBPtoNBP} to the axes of the ellipsoid. However, it will be crucial that
the lengths of the axes of the ellipsoid are bounded by $2^{O(n)}$. We will now argue how to
make an arbitrary ellipsoid well rounded. 

Let us denote $\lambda_{\max}(\pazocal{E}) := \max\{ \lambda_i : i=1,\ldots,n \}$ as the maximum length of an axis. 
Recall that a matrix $U \in \setR^{n \times n}$ is \emph{unimodular} if $U \in \setZ^{n \times n}$
and $|\textrm{det}(U)| = 1$. In particular, the linear map $T : \setR^n \to \setR^n$ with 
$T(\bm{x}) = U\bm{x}$ is a bijection on the integer lattice, meaning that $T(\setZ^n) = \setZ^n$.
It turns out that one can use the \emph{lattice basis reduction} method 
by Lenstra, Lenstra and Lov{\'a}sz~\cite{lenstra1982factoring} to find a unimodular linear transformation that ``regularizes''
any given ellipsoid. 
Note that it suffices to work with the regularized ellipsoid $T(\pazocal{E})$
since $\textrm{vol}_n(T(\pazocal{E})) 
 = \textrm{vol}_n(\pazocal{E})$ and if we find a point $\bm{x} \in (\rho T(\pazocal{E})) \cap \setZ^n$, then 
by linearity $T^{-1}(\bm{x}) \in \rho \pazocal{E}$ and $T^{-1}(\bm{x}) \in \setZ^{n}$.

%

Given $\bm{b}_1,\ldots,\bm{b}_n\in \setR^n$ we define the \emph{Gram-Schmidt orthogonalization} iteratively as $
 \hat{\bm{b}}_j=\bm{b}_j-\sum_{i<j}\mu_{ij}\hat{\bm{b}}_i,$ where $\mu_{ij}=\frac{\langle \bm{b}_j,\hat{\bm{b}}_i\rangle}{\|\hat{\bm{b}}_i\|_2^2}.$
Notice that we can then write $\bm{b}_j=\hat{\bm{b}}_j+\sum_{i<j}\mu_{ij}\hat{\bm{b}}_i.$
In particular, suppose $B$ is the matrix with columns $\bm{b}_1,\ldots,\bm{b}_n$ and $\hat{B}$ is the matrix with columns $\hat{\bm{b}}_1,\ldots,\hat{\bm{b}}_n$. Then $B=\hat{B}V$ for an upper triangular matrix $V$ with ones along the diagonal and $V_{ij}=\mu_{ij}$ for $i<j$.

\begin{definition} Let $B\in \setR^{n\times n}$ be a lattice basis and let $\mu_{ij}$ be the coefficients from Gram-Schmidt orthogonalization. The basis is called \emph{LLL reduced} if 
\begin{itemize}
\item (Coefficient-reduced): $|\mu_{ij}|\le \frac{1}{2}$ for all $1\le i<j\le n$.
\item (Lov{\'a}sz condition): $\|\hat{\bm{b}}_i\|_2^2\le 2\|\hat{\bm{b}}_{i+1}\|_2^2$ for $i=1,...,n-1$.
\end{itemize}
\end{definition}
LLL reduction has been widely used in diverse fields such as integer programming and cryptography \cite{nguyen2010lll}.
One property of the LLL reduced basis 
is that the eigenvalues of the corresponding matrix $B$ are bounded away from $0$: 

\begin{lemma}\label{lem:LLLlowerbound} Let $B$ denote the matrix with columns $\bm{b}_1,\ldots,\bm{b}_n$. If $\bm{b}_1,\ldots,\bm{b}_n$ is an $LLL$-reduced basis with $\|\bm{b}_i\|_2\ge 1$ for all $i$, then $\|B\bm{x}\|_2\ge 2^{-3n/2}\cdot \|\bm{x}\|_2$ for all $\bm{x}\in \setR^n$.
\end{lemma}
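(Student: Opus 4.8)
The plan is to bound $\|B\bm{x}\|_2$ from below by expressing $B$ through its Gram--Schmidt factorization and exploiting the two LLL properties. Recall that $B = \hat{B} V$ where $\hat{B}$ has the orthogonal columns $\hat{\bm{b}}_1,\ldots,\hat{\bm{b}}_n$ and $V$ is upper triangular with ones on the diagonal and off-diagonal entries $\mu_{ij}$ of absolute value at most $\tfrac12$. The natural strategy is: (1) show that $\|V^{-1}\bm{y}\|_2$ is not too small relative to $\|\bm{y}\|_2$ — equivalently that $V$ does not blow up vectors too much, so that $\|V\bm{z}\|_2 \le 2^{O(n)}\|\bm{z}\|_2$; (2) show that the Gram--Schmidt lengths $\|\hat{\bm{b}}_i\|_2$ are all bounded below by $2^{-\Theta(n)}$, using the Lov\'asz condition together with $\|\bm{b}_1\|_2 \ge 1$; and (3) combine these, writing $\|B\bm{x}\|_2 = \|\hat{B}(V\bm{x})\|_2$ and using orthogonality of the columns of $\hat{B}$ to get $\|\hat{B}\bm{y}\|_2^2 = \sum_i y_i^2 \|\hat{\bm{b}}_i\|_2^2 \ge (\min_i \|\hat{\bm{b}}_i\|_2^2)\,\|\bm{y}\|_2^2$.

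For step (2): from the Lov\'asz condition $\|\hat{\bm{b}}_i\|_2^2 \le 2\|\hat{\bm{b}}_{i+1}\|_2^2$ we get $\|\hat{\bm{b}}_{i+1}\|_2^2 \ge \tfrac12 \|\hat{\bm{b}}_i\|_2^2$, so iterating, $\|\hat{\bm{b}}_i\|_2^2 \ge 2^{-(i-1)}\|\hat{\bm{b}}_1\|_2^2$. Since $\hat{\bm{b}}_1 = \bm{b}_1$ and $\|\bm{b}_1\|_2 \ge 1$, this yields $\|\hat{\bm{b}}_i\|_2 \ge 2^{-(i-1)/2} \ge 2^{-(n-1)/2}$ for all $i$. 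For step (1): the matrix $V$ is unipotent upper-triangular with entries bounded by $\tfrac12$ in absolute value, and a standard estimate (e.g. via the Neumann series for $V^{-1} = \sum_{m\ge 0}(I-V)^m$, where $I-V$ is strictly upper triangular so $(I-V)^n = 0$, or simply by bounding the triangular system entrywise) shows that the entries of $V$, and hence the operator norm $\|V\|_2$, are at most $2^{O(n)}$; being crude, each entry of $V$ is at most $(3/2)^n$ or so, and the $n\times n$ bound gives $\|V\bm{x}\|_2 \le 2^{O(n)}\|\bm{x}\|_2$. Putting $\bm{y} = V\bm{x}$, we have $\|B\bm{x}\|_2 = \|\hat{B}\bm{y}\|_2 \ge (\min_i\|\hat{\bm{b}}_i\|_2)\|\bm{y}\|_2 \ge 2^{-(n-1)/2}\|V\bm{x}\|_2 \ge 2^{-(n-1)/2}\|\bm{x}\|_2 / \|V^{-1}\|_2 \ge 2^{-(n-1)/2} \cdot 2^{-O(n)}\|\bm{x}\|_2$, and one then checks the constants to land at the claimed $2^{-3n/2}$.

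The main obstacle is getting the constant in the exponent of the $V$-contribution sharp enough that the product comes out as $2^{-3n/2}$ rather than a worse constant — this requires a careful rather than crude bound on $\|V^{-1}\|_2$ (or, more precisely, we want $\|V\bm{x}\|_2 \ge 2^{-c n}\|\bm{x}\|_2$ with $c$ small; note we actually need a \emph{lower} bound on $\|V\bm{x}\|_2$, equivalently an \emph{upper} bound on $\|V^{-1}\|_2$). One clean way to control $V^{-1}$ is to note it is again unipotent upper-triangular, so by induction on the back-substitution recurrence for solving $V\bm{x} = \bm{y}$ one shows $|x_i| \le \sum_{j\ge i}\binom{\,\cdot\,}{\,\cdot\,}2^{-(j-i)}|y_j|$-type bounds, giving $\|\bm{x}\|_\infty \le 2^{n}\|\bm{y}\|_\infty$ or better, hence $\|V^{-1}\|_2 \le n\cdot 2^{n}$. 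Balancing this against $2^{-(n-1)/2}$ from step (2) and absorbing the polynomial factors into the exponent yields $\|B\bm{x}\|_2 \ge 2^{-3n/2}\|\bm{x}\|_2$ for $n$ large, with the small cases handled directly. Alternatively, one avoids $V^{-1}$ entirely by arguing inductively on the coordinates of $\bm{x}$: let $m$ be the largest index with $x_m \ne 0$; then the $\hat{\bm{b}}_m$-component of $B\bm{x}$ equals $x_m$ (since $V$ has ones on the diagonal and is upper triangular), so $\|B\bm{x}\|_2 \ge |x_m|\,\|\hat{\bm{b}}_m\|_2 \ge |x_m| 2^{-(n-1)/2}$, and one then needs to relate $|x_m|$ to $\|\bm{x}\|_2$ — which is where the $\mu_{ij}$ bounds re-enter and a short inductive argument over the remaining coordinates completes the estimate.
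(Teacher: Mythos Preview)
Your overall strategy---factor $B=\hat{B}V$, lower-bound $\min_i\|\hat{\bm{b}}_i\|_2$ via the Lov\'asz condition, and lower-bound $\|V\bm{x}\|_2/\|\bm{x}\|_2$ via an upper bound on $\|V^{-1}\|$---is sound and, once the constants are actually worked out, does yield the claim (indeed a bit more: back-substitution gives $\|V^{-1}\|_\infty=O((3/2)^n)$, so the product comfortably beats $2^{-3n/2}$). The paper, however, handles the $V$-part differently and more directly: it normalizes $\|\bm{x}\|_2=1$, picks the largest index $k$ with $|x_k|\ge 2^{-k}$ (such a $k$ exists since $\sum_j x_j^2=1$), so that $|x_j|<2^{-j}$ for all $j>k$, and then computes $|(V\bm{x})_k|=\big|x_k+\sum_{j>k}\mu_{kj}x_j\big|\ge 2^{-k}-\tfrac12\sum_{j>k}2^{-j}\ge 2^{-n}$. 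Combined with $\|\hat{\bm{b}}_k\|_2^2\ge 2^{-k}$ (obtained from $\|\bm{b}_k\|_2\ge 1$ and the standard LLL estimate $\|\bm{b}_k\|_2^2\le 2^k\|\hat{\bm{b}}_k\|_2^2$, so using \emph{each} hypothesis $\|\bm{b}_k\|\ge 1$ rather than only $\|\bm{b}_1\|\ge 1$ as you do), this gives $\|B\bm{x}\|_2^2\ge |(V\bm{x})_k|^2\|\hat{\bm{b}}_k\|_2^2\ge 2^{-3n}$ in one line, with no operator-norm computation. Your approach is more modular and slightly sharper; the paper's is shorter and self-contained.

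Two caveats on your writeup. First, the description in step~(1) initially points the wrong way (an upper bound on $\|V\|$ is not what is needed; you want an upper bound on $\|V^{-1}\|$), though you correct this in the next paragraph. Second, your ``alternative'' at the end---take the largest index $m$ with $x_m\ne 0$ and use $(V\bm{x})_m=x_m$---does not close as stated, because $|x_m|$ can be arbitrarily small relative to $\|\bm{x}\|_2$; the paper's choice of the threshold $2^{-k}$ is precisely the trick that simultaneously forces a lower bound on $|x_k|$ and an exponentially decaying upper bound on the tail $\sum_{j>k}|x_j|$.
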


\begin{proof}
Let $\hat{B}$ denote the Gram-Schmidt orthogonalization of $B$, with columns $\hat{\bm{b}}_1,\ldots,\hat{\bm{b}}_n$.
%
Now, for any $k$, we can use the properties of LLL reduction to gain the following bound~(proposition (1.7) in \cite{lenstra1982factoring}).  
$$1\le \|\bm{b}_k\|_2^2=\|\hat{\bm{b}}_k\|_2+\sum_{i<k}\mu_{ik}^2\|\hat{\bm{b}}_i\|_2^2\le \Big( 1+\frac{1}{4}\sum_{i<k}2^{k-i}\Big) \cdot \|\hat{\bm{b}}_k\|_2^2\le 2^k\cdot \|\hat{\bm{b}}_k\|_2^2.$$
In particular, $\|\hat{\bm{b}}_k\|^2\ge 2^{-n}$ for all $k=1,\ldots,n$.
Now let $V$ be the matrix so that $B=\hat{B}V$, and let $\bm{x}\in \setR^n$ with $\|\bm{x}\|_2=1$. 
Let $k$ denote the largest index with $|x_k|\ge 2^{-k}$.
Then 
$$|(V\bm{x})_k|= \Big|x_k+\sum_{j>k}\mu_{kj}x_j\Big| \ge |x_k|-\frac{1}{2}\sum_{j>k}|x_j|
\ge 2^{-k}-\frac{1}{2}\sum_{j>k}2^{-j}\ge 2^{-n}.$$
Now, by the orthogonality of $\hat{\bm{b}}_1,\ldots,\hat{\bm{b}}_n$, we have 
$$\|B\bm{x}\|_2^2=\|\hat{B}V\bm{x}\|_2^2=\sum_{i=1}^n(V\bm{x})_i^2\|\hat{\bm{b}}_i\|_2^2 \ge|(V\bm{x})_k|^2 \cdot 2^{-n}\ge 2^{-3n}.$$
Taking square roots gives the claim. 
%
\end{proof}

\begin{lemma} \label{lem:FindingUnimodularTransformation}
Let $\pazocal{E}=\{\bm{x} \in \setR^n : \|A\bm{x}\|_2^2\le 1\}$ be an ellipsoid. 
Then in polynomial time, we can find
\begin{enumerate}
\item[(1)] either a vector $\bm{x}\in \pazocal{E}\cap \setZ^n$ 
\item[(2)] or a linear transformation $T$ so that 
 $T(\bm{x})=U\bm{x}$ for a unimodular matrix $U$ and $\lambda_{\textrm{max}}(T(\pazocal{E}))\le 2^{3n/2}$.
\end{enumerate}
\end{lemma}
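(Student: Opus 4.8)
The plan is to run lattice basis reduction on the lattice generated by the columns of $A$ and read off one of the two alternatives directly from the output. Assuming, as we may, that $A$ is an invertible rational matrix given with polynomially many bits, I would first apply the LLL algorithm~\cite{lenstra1982factoring} to the basis formed by the columns $\bm{a}_1,\ldots,\bm{a}_n$ of $A$. This runs in polynomial time and returns an LLL-reduced basis $\bm{b}_1,\ldots,\bm{b}_n$ of the same lattice $\Lambda := A\setZ^n$; since the $\bm{b}_i$ and the $\bm{a}_i$ are both bases of $\Lambda$, there is a unimodular matrix $V$ with $B = AV$, where $B$ is the matrix with columns $\bm{b}_1,\ldots,\bm{b}_n$ (the matrix $V$ is produced directly by LLL, or is recovered as $V = A^{-1}B$).

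Next I would split into two cases according to $\min_{j}\|\bm{b}_j\|_2$. If $\|\bm{b}_j\|_2 \le 1$ for some $j$, then $\bm{x} := V\bm{e}_j \in \setZ^n \setminus \{\bm{0}\}$ satisfies $A\bm{x} = B\bm{e}_j = \bm{b}_j$, so $\|A\bm{x}\|_2^2 = \|\bm{b}_j\|_2^2 \le 1$ and hence $\bm{x} \in \pazocal{E} \cap \setZ^n$; we output $\bm{x}$, giving alternative~(1). Otherwise $\|\bm{b}_j\|_2 > 1$ for every $j$, so the hypothesis of Lemma~\ref{lem:LLLlowerbound} holds and we obtain $\|B\bm{y}\|_2 \ge 2^{-3n/2}\|\bm{y}\|_2$ for all $\bm{y} \in \setR^n$. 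Set $U := V^{-1}$ (again unimodular) and $T(\bm{x}) := U\bm{x}$. Substituting $\bm{y} = U\bm{x}$, i.e. $\bm{x} = V\bm{y}$, shows $T(\pazocal{E}) = \{\bm{y} : \|AV\bm{y}\|_2^2 \le 1\} = \{\bm{y} : \|B\bm{y}\|_2^2 \le 1\}$, which is an ellipsoid of the form~\eqref{eq:Ellipsoid} whose axis lengths are exactly $1/\sigma_1(B),\ldots,1/\sigma_n(B)$ (from the singular value decomposition of $B$). Hence $\lambda_{\max}(T(\pazocal{E})) = 1/\sigma_{\min}(B) = 1/\min_{\|\bm{y}\|_2=1}\|B\bm{y}\|_2 \le 2^{3n/2}$, which is alternative~(2). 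All quantities involved ($V$, $U$, the norms $\|\bm{b}_j\|_2$, the transformed ellipsoid) are computable in polynomial time, so the whole procedure is polynomial.

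The subtle point, and what I would flag as the main obstacle, is the verification of the hypothesis $\|\bm{b}_j\|_2 \ge 1$ of Lemma~\ref{lem:LLLlowerbound}: an LLL-reduced basis need not have all of its vectors of Euclidean norm at least $1$ even when its first vector does, since the Lov\'asz condition only forces the Gram--Schmidt norms to decay by a bounded factor. The case analysis above sidesteps this precisely because failure of the hypothesis is the helpful case: any reduced basis vector $\bm{b}_j$ with $\|\bm{b}_j\|_2 \le 1$ is the image under $A$ of the integer point $V\bm{e}_j \in \pazocal{E}$, so we simply output that point. The remaining items are routine: one records that $A$ invertible makes $\Lambda$ full rank and $\pazocal{E}$ a genuine bounded ellipsoid, and that the polynomial bit-complexity of $A$ (guaranteed in the setting where this lemma is used, after approximating $K$ by an ellipsoid) is what makes the LLL step polynomial time.
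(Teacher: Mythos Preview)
Your proof is correct and follows essentially the same route as the paper: run LLL on the columns of $A$ to get $B=AV$ with $V$ unimodular, output $V\bm{e}_j$ if some $\|\bm{b}_j\|_2\le 1$, and otherwise set $T=V^{-1}$ and invoke Lemma~\ref{lem:LLLlowerbound} to bound $\lambda_{\max}(T(\pazocal{E}))=1/\sigma_{\min}(B)\le 2^{3n/2}$. The only difference is cosmetic---the paper calls your $V$ by the name $U$ and writes $T(\bm{x})=U^{-1}\bm{x}$, and it phrases the bound via $\max_{\|B\bm{x}\|_2\le 1}\|\bm{x}\|_2$ rather than the smallest singular value---but the argument is the same.
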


\begin{proof}
Use the algorithm of \cite{lenstra1982factoring} to find 
a unimodular matrix $U$ such that 
$B=AU$ is LLL reduced. Let $\bm{b}_1,\ldots,\bm{b}_n$ denote the columns of $B$.
Notice that if $\|\bm{b}_i\|_2\le 1$, then 
$A^{-1}\bm{b}_i\in \pazocal{E}\cap \setZ^n$, and so we are done.
So assume now that $\|\bm{b}_i\|_2\ge 1$ for all $i$. 

Define $T(\bm{x})=U^{-1}\bm{x}$, and notice that $T(\pazocal{E})=\{\bm{x} \in \setR^n :\|B\bm{x}\|_2^2\le 1\}$.
We then have
$$\lambda_{\textrm{max}}(f(\pazocal{E}))=\max_{\bm{x}\in f(\pazocal{E})}\|\bm{x}\|_2=\max_{\|B\bm{x}\|_2\le 1}\|\bm{x}\|_2=\max_{\bm{x}\ne 0}\frac{\|\bm{x}\|_2}{\|B\bm{x}\|_2}\le 2^{3n/2},$$
where the last inequality follows from Lemma \ref{lem:LLLlowerbound}. 
%
\end{proof}

%
Finally we can prove one of our main results, Theorem~\ref{thm:npptomink}. 
\begin{proof}[Theorem~\ref{thm:npptomink}]
Let $K \subseteq \setR^n$ be a convex body with $\mathrm{vol}_n(K) \geq 2^n$. We compute an ellipsoid\footnote{Note that there \emph{exists} an ellipsoid that approximates $K$ within a factor of $\sqrt{n}$ and if $K$ is a polytope with $m$ facets, then this ellipsoid can be found in time polynomial in $n$ and $m$. However, if one only has a separation oracle for $K$, then the best factor achievable in polynomial time is $n$.} 
$\pazocal{E} = \{ \bm{x} \in \setR^n \mid \sum_{i=1}^n \frac{1}{\lambda_i^2} \left<\bm{x},\bm{a}_i\right>^2 \leq 1\}$
so that $\frac{1}{5\sqrt{n}} \pazocal{E} \subseteq K \subseteq \frac{1}{5}\sqrt{n} \pazocal{E}$. Then 
\[
  2^n \cdot 5^n \cdot n^{-n/2}\leq \mathrm{vol}_n(K) \cdot 5^n \cdot n^{-n/2} \leq \mathrm{vol}_n(\pazocal{E}) = \underbrace{\mathrm{vol}(B(\bm{0},1))}_{\leq 5^n n^{-n/2}} \cdot \prod_{i=1}^n \lambda_i.
\]
and hence $\prod_{i=1}^n \lambda_i \geq 1$.
We apply Lemma~\ref{lem:FindingUnimodularTransformation} to either find an integer point in $\pazocal{E}$
and we are done, or we find a unimodular transformation $T$ so that the ellipsoid $\tilde{\pazocal{E}} := T(\pazocal{E})$
has all axes of length at most $2^{O(n)}$. Suppose the latter case happens. We write 
$\tilde{\pazocal{E}} = \{ \bm{x} \in \setR^n\mid \sum_{i=1}^n \frac{1}{\tilde{\lambda}_i^2} \left<\bm{x},\tilde{\bm{a}}_i\right>^2 \leq 1\}$ and observe that still $\prod_{i=1}^n \tilde{\lambda}_i \geq 1$ as the volume of the ellipsoid has not changed. We make use of the $\delta$-approximation for the number balancing
problem to apply Theorem~\ref{thm:GeneralNBPtoNBP} to the vectors $\tilde{\bm{a}}_1,\ldots,\tilde{\bm{a}}_n$ and parameters $\tilde{\lambda}_1,\ldots,\tilde{\lambda}_n$
and obtain a vector $\bm{x} \in \setZ^n \setminus \{ 0\}$ with $|\left<\tilde{\bm{a}}_i,\bm{x}\right>| \leq \tilde{\lambda}_i \cdot O(n^4)$. 
Then $\sum_{i=1}^n \frac{1}{\tilde{\lambda}_i^2} \left<\bm{a}_i,\bm{x}\right>^2 \leq O(n^9)$ and hence $\bm{x} \in O(n^{4.5}) \cdot \tilde{\pazocal{E}}$. Then $T^{-1}(\bm{x}) \in (O(n^5) \cdot K) \cap (\setZ^n \setminus \{\bm{0}\})$. 
\end{proof}
\bibliographystyle{alpha}
\bibliography{main}

\end{document}